\documentclass[a4paper, 10pt, conference]{ieeeconf}      % Use this line for a4
\IEEEoverridecommandlockouts                              % This command is only
\usepackage[utf8]{inputenc}
\usepackage[T1]{fontenc}

 \usepackage{graphicx}      % include this line if your document contains figures
 \usepackage{subcaption} % include in preamble
\usepackage{cite}         % <-- Add this line (best for ieeeconf)

\usepackage{url} 
 \usepackage{color}
 \usepackage{changes}

 \usepackage[colorlinks=true, linkcolor=black, urlcolor=black, citecolor=blue]{hyperref}

\usepackage{balance}       % Optional but helpful

\usepackage{amsmath,amssymb,amsfonts,mathrsfs}
\newtheorem{theorem}{Theorem} 

 \newtheorem{assumption}{Assumption}
\newtheorem{definition}{Definition}
 \newtheorem{proposition}{Proposition}

\newtheorem{example}{Example}
\newtheorem{problem}{Problem}
\newtheorem{remark}{Remark}
\usepackage{comment}
\usepackage{bm}
\allowdisplaybreaks[4]
\usepackage{tcolorbox}
\newcommand{\der}{{\rm{d}}}

\title{\LARGE \bf
Towards Lag Consensus with Noisy Digital Twins Perception in Second-order Multi-agent Cyber-physical Systems
}

\author{Zhicheng Zhang$^{1}$,
Fausto Francesco Lizzio$^{2}$,
Zhongjun Ma$^{3}$,
and Masaaki Nagahara$^{4}$% <-this % stops a space
\thanks{*This work was partly supported by Tateisi Science and
Technology Foundation, and National Natural Science
Foundation of China under Grant No.~62466011.}% <-this % stops a space
\thanks{$^{1}$Department of Electrical Engineering, Kyoto University, Katsura, Japan 
(e-mail:~{\tt\small zhang.zhicheng.2c@kyoto-u.ac.jp})}%
\thanks{$^{2}$Department of Mechanical and Aerospace Engineering,
Politecnico di Torino, Italy 
(e-mail:~{\tt\small fausto.lizzio@polito.it})}
 \thanks{$^3$School of Mathematics and Computing Science, Guilin University of Electronic Technology, China 
 (e-mail:~{\tt\small mazhongjun@guet.edu.cn})}
\thanks{$^{4}$Graduate School of Advanced Science and Engineering, Hiroshima University, Japan
(e-mail:~{\tt\small nagahara@ieee.org})}
}

\begin{document}

\maketitle
\thispagestyle{empty}
\pagestyle{empty}

%%%%%%%%%%%%%%%%%%%%%%%%%%%%%%%%%%%%%%%%%%%%%%%%%%%%%%%%%%%%%%%%%%%%%%%%%%%%%%%%
\begin{abstract}
In this paper, we study second-order lag consensus in multi-agent  cyber-physical networks subject to random noise and input failures, within a framework modeling the interactions and perceptions between physical twins and digital twins. We propose a lag consensus protocol and establish sufficient conditions for the mean-square (exponential) stability of the resulting stochastic lag error dynamics. The consensus criteria are derived via Lyapunov analysis using the It\^{o} formula, ensuring robustness to random perturbations and intermittent input failures. Numerical examples illustrate the effectiveness of the proposed method. %consensus strategy.

\end{abstract}

\section{Introduction}
The stability and network modeling of cyber-physical systems (CPS) can be examined by analyzing collective behaviors such as consensus, synchronization, tracking, and formation in multi-agent systems (MAS), e.g., see ~\cite{lewis2013cooperative,bullo2018lectures,nagahara2024control}. Within the networked control systems society, recent attention has focused on the cooperative design, dependability, and resilient control synthesis of physical twins (PT) in the real-world physical layer and digital twins (DT) in the cyber layer within the agent-based CPS framework \cite{pasqualetti2015control,dibaji2019systems,cui2023resilient}.

Over the past decade, lag consensus or delayed tracking problem % \cite{Wang1}, 
is a central topic in the multi-agent cyber-physical networks and has seen significant progress in cooperative control synthesis for deterministic and uncertain dynamical networks. The motivation idea of lag consensus stems from collision avoidance and congestion prevention in leader–following MAS by embedding a time delay in the leader’s dynamics, such that the followers’ states track those of the leader with a specified delay,~see,~e.g.,~\cite{Wang1}. 
Subsequently, some related studies have been extended to cluster lag consensus \cite{ma2016cluster}, fixed-time lag consensus \cite{ni2017fixed}, lag-bipartite consensus \cite{Bhowm}, and 
successive lag synchronization \cite{li2023successive}.
%component-lag consensus \cite{wang2023p}. 
In addition, various lag consensus based control laws have been proposed, including adaptive pinning control \cite{wang2016pinning}, robust event-driven control \cite{yu2025lag}, sliding mode control \cite{fu2025time}, and ${H}_{\infty}$ control~\cite{wang2025lagTCNS}. 
Obviously, robust control methods for lag consensus has drawn growing attention, as uncertainty is an inevitable factor of practical systems. However, there are few results on modeling it using stochastic dynamics, which require more resilient control paradigms and some kind of stability criteria. An important issue is how to incorporate stochastic noise into network modeling in a lag consensus perspective, while still allowing the problem to be solved using standard and general ways, such as those based on stochastic differential equation (SDE) stability theory \cite{Oksendal2003}.

In this paper, we introduce a new viewpoint on the integration of multi-agent lag consensus and cyber-physical networks, formulated within a framework involving PT-DT transmissions and interactions. These two paradigms can be jointly utilized to analyze and design lag consensus control protocols. The physical layer consisting of PTs corresponds to real-world agent-based networks, such as connected and autonomous vehicles (CAVs), unmanned aerial vehicle (UAV) swarms, intelligent robotic systems, multiple satellite systems, and internet of things for smart cities, etc \cite{cui2023resilient}. However, these agent-based dynamical networks are inherently coupled with their DT perceptions, where state estimation, sensing, and communication \cite{vatanski2009networked} are inevitably subject to the uncertainty, random noise, or stochastic perturbations \cite{wang2023survey}. In a nutshell, the PT-DT interaction in the CPS can be characterized by a dynamic network model with a similar or same network topology, where the true states in the physical layer (PTs) are perceived through the estimated states in the cyber layer (DTs). In fact, such perceptions can be affected by stochastic perturbations in the noisy coupling layer, arising from sensing jitter, estimation errors, and uncertainties (see~Fig.~\ref{fig:DT-PT-MACPS}). 

% \textcolor{red}{I would provide a brief introduction of the intermittent input failures before talking about the main contribution. }

In view of this, we aim toward reaching lag consensus with noisy digital-twin perceptions in agent-based cyber-physical systems, where the resulting system is modeled by second-order stochastic dynamics. The novel contributions of this paper are summarized as follows: (i) we propose a resilient stochastic lag consensus protocol in the case of intermittent input failures; (ii) we obtain the mean-square stability criteria for SDE based second-order multi-agent CPS framework in terms of PT-DT interactions; (iii) we prove lag consensus robustness against some classes of nonlinear dynamics in physical layer (PT) simulations.

%=========================Key=======================
% {\bf\color{red}
%  (1) Why introduce CPS PT-DT
%  (2) Why explore lag consensus comapred to other consensus and its developemnts? 
%  (3) Why consider random noise, how to model it? 
%  (4) why investigate intermittent input failures? }

%{\bf\emph{Structure of this paper}}--
This paper is organized as follows. Section~\ref{sec:preliminary} reviews preliminaries and lag consensus concepts. Section~\ref{sec:stochastic-lag-CPS-PT-DT} presents the problem setup for PT-DT network modeling in multi-agent CPS and presents the resilient lag consensus protocol. Section~\ref{sec:main-results-lag-consensus} provides sufficient conditions for stochastic lag consensus based on the stability of the resulting error dynamics governed by SDE. Section~\ref{sec:numerical-examples} gives the numerical examples to illustrate the effectiveness of proposed methods. Finally, we conclude the paper in Section~\ref{sec:conclusion}.

%===============================
% This paper is organized as follows: Section~\ref{sec:preliminary} reviews some preliminaries and lag consensus concepts. Section~\ref{sec:stochastic-lag-CPS-PT-DT} describes the problem setup about network modeling of PT-DT interactions in multi-agent CPS, and presents the resilient lag consensus protocol. In Section~\ref{sec:main-results-lag-consensus}, we provides some sufficient condition of stochastic lag consensus derived from the stability of error dynamics governed SDE. Section~\ref{sec:numerical-examples} we gives the numerical examples to illustrate the main results. Finally, we conclude this paper.

%and give the Appendix

%=================Notations
%\section{Notation}
\emph{Notation:} 
%Let %$\mathbb{N}=\{ 0,1,2, \cdots \}$ and 
%$\mathcal{I}_N=\{ 1,2, \cdots ,N\}$. 
 Let ${I_N}$ (${0_N}$) be the $N$-order identity (zero) matrix and $\mathbf{1}_{n} = (1, \ldots ,1)^{\top} \in \mathbb{R}^{n}$. $\|\cdot\|$ is the Euclidean norm, $\mathbb{E}$ is the expectation and $\otimes$ denotes the Kronecker product. 
% $\mathscr{C}\left([-\tau, 0]; {\mathbb{R}^d}\right)$ is the Banach space $\mathscr{L}(\cdot)$ of all continuous vector-valued functions mapping the intervals $[-\tau, 0]$ into ${\mathbb{R}^d}$ with the norm ${\|\phi\|_{\infty}=\sup _{-\tau \leq \zeta \leq 0}\|\phi(\zeta)\|}$.
For $A \in {\mathbb{R}^{N \times N}}$, ${\lambda _{\max }}(A)$ or ${\lambda _{\min }}(A)$ means the maximum or minimum eigenvalue of matrix $A$ and $A^{\top}$ indicate its transpose. Let ${A^s} = \frac{A + A^{\top}}{2}$ be the symmetrical part of $A$, and $A<0$ ($A\leq0$) means that matrix $A$ is real symmetric negative definite (semi-definite). The trace of matrix $A$ is denoted by $\textbf{tr}(A)$. We denote by $\mathrm{diag}\{{\kappa_1},\ldots ,{\kappa_N}\}$ the diagonal matrix.

%================
% \emph{Graph Theory}:
% Let $\mathcal{G}=(\mathcal{V},\mathcal{E},{A})$ be a weighted digraph of order $N$, with the set of nodes $\mathcal{V}=\{\mathfrak{v}_{1},\ldots,\mathfrak{v}_{N}\}$, the set of edges $\mathcal{E}\subseteq\mathcal{V}\times\mathcal{V}$, and the weighted adjacency matrix ${A}={({a_{ij}})_{N\times{N}}}$ of $\mathcal{G}$. A directed edge of graph $\mathcal{G }$ is denoted by ${e}_{ij}=(\mathfrak{v}_{i},\mathfrak{v}_{j})$ which means that node $\mathfrak{v}_i$ can receive information from node $\mathfrak{v}_j$. We define ${a_{ji}}>0$ if and only if there is a directed edge $\mathfrak{e}_{ij}\in\mathcal{E}$; otherwise, ${a_{ji}}=0$. A digraph contains a spanning tree if there is a node that can reach all the other nodes following the edge directions in graph $\mathcal{G}$. 
% %A digraph $\mathcal{G}$ is called strongly connected if for any two distinct nodes of the graph, there exists a connected directed path. 
% The set of the neighbors of the agent $i$ is denoted by ${\mathcal{N}_i}=\{\mathfrak{v}_j\in\mathcal{V}:(\mathfrak{v}_i,\mathfrak{v}_j)\in\mathcal{E}\}$, e.g., see \cite{bullo2018lectures}.

\section{Preliminaries and Lag Consensus}\label{sec:preliminary}
\subsection{Graph Theory}
%\emph{Graph Theory}:
Let $\mathcal{G}=(\mathcal{V},\mathcal{E},{A})$ be a weighted digraph of order $N$, with the set of nodes $\mathcal{V}=\{\mathfrak{v}_{1},\ldots,\mathfrak{v}_{N}\}$, the set of edges $\mathcal{E}\subseteq\mathcal{V}\times\mathcal{V}$, and the weighted adjacency matrix ${A}={({a_{ij}})_{N\times{N}}}$ of $\mathcal{G}$. A directed edge of graph $\mathcal{G }$ is denoted by ${e}_{ij}=(\mathfrak{v}_{i},\mathfrak{v}_{j})$ which means that node $\mathfrak{v}_i$ can receive information from node $\mathfrak{v}_j$. We define {${a_{ij}}>0$} if and only if there is a directed edge $\mathfrak{e}_{ij}\in\mathcal{E}$; otherwise, {${a_{ij}}=0$}. The set of the neighbors of the agent $i$ is denoted by ${\mathcal{N}_i}=\{\mathfrak{v}_j\in \mathcal{V}:(\mathfrak{v}_i,\mathfrak{v}_j)\in\mathcal{E}\}$.
%e.g., see \cite{bullo2018lectures}.
%A digraph $\mathcal{G}$ is called strongly connected if for any two distinct nodes of the graph, there exists a connected directed path. 

The network communication is described by the adjacency matrix $A=(a_{ij})\in\mathbb{R}^{N\times{N}}$, and the associated graph Laplacian $L=(l_{ij})$ is given by 
$l_{ij}=-a_{ij}$ for $i\ne{j}$ and $l_{ii}=\sum_{j=1,j\neq{i}}^{N}a_{ij}$ for $i=j$. 
A digraph contains a spanning tree if there is a node that can reach all the other nodes following the edge directions.
Introduce a \emph{leader} node $\mathfrak{v}_0$, and an augmented graph $\bar{\mathcal{G}} = (\bar{\mathcal{V}}, \bar{\mathcal{E}})$ such that $\bar{\mathcal{V}} = \{\mathfrak{v}_0, \mathfrak{v}_1, \ldots, \mathfrak{v}_N\}$ 
and $\bar{\mathcal{E}} \subseteq \bar{\mathcal{V}} \times \bar{\mathcal{V}}$. 

%===================Assumption (Network Toplogy)
\begin{assumption}[Network Topology]
\label{assum:network-topology}
The graph $\bar{\mathcal{G}}$ contains a \emph{directed spanning tree} with the leader $\mathfrak{v}_0$ as the root node for all $t \ge 0$.
\end{assumption}
%The spanning tree of the direct netwok see \cite{lews}

%===============
\subsection{Second-order Multi-Agent Networks}
%{\emph{Multi-agent dynamics}}: 
Consider a class of $N$ second-order agents, called \emph{followers}, whose dynamics are governed by 
\begin{align}\label{eq:follower-dynamics}
\begin{cases}
\dot{x}_i(t) \!=\! v_i(t), \\
\dot{v}_i(t) \!=\!f(x_i(t), v_i(t))
+ c_1 \sum\limits_{j\in \mathcal{N}_i} a_{ij}\big (x_j(t) \!-\! x_i(t)\big)\\ 
{\kern 30pt}+\, c_2 \sum\limits_{j\in \mathcal{N}_i} a_{ij}\big (v_j(t) - v_i(t)\big)
+ u_i(t),
\end{cases}
\end{align}
where $x_i(t), v_i(t), u_i(t) \in \mathbb{R}^n$ indicate the position, velocity, and control input of (follower) agent~$i$, $\forall{i}\in\mathcal{V}$, respectively.
The function $f:\mathbb{R}^n \times \mathbb{R}^n \to \mathbb{R}^n$ denotes the intrinsic dynamics of homogeneous agents.

%%-------------Assumption--(Nonlinear)-
\begin{assumption}[Lipschitz Condition]
\label{assum:Lipschitz-cond-nonlinear}
The function $f:\mathbb{R}^{n} \times \mathbb{R}^{n} \to \mathbb{R}^{n}$ is nonlinear and Lipschitz continuous in $(x,v)$; that is, there exist constants $\rho_1, \rho_2 \ge 0$ such that
\begin{align*}
\left\|f(x_1, v_1) - f(x_2, v_2)\right\|
\le \rho_1 \left\|x_1 - x_2\right\| + \rho_2 \|v_1 - v_2\|,
\end{align*}
for all $ x_1, x_2, v_1, v_2 \in \mathbb{R}^{n}.$
\end{assumption}

% \begin{align*}
%    l_{ij}=-a_{ij}~\text{for}~i\ne{j}, \quad
%    l_{ij}=\sum_{j=1,j\neq{i}}^{N}a_{ij} ~\text{for}~i={j}
% \end{align*}
%=======

%%==============leader dynamics==========
The \emph{leader} acts as a virtual dynamic reference whose dynamics is modeled as
\begin{align}
%\begin{cases}
\dot{x}_0(t) = v_0(t),\quad
\dot{v}_0(t) = f(x_0(t), v_0(t)),
%\end{cases}
\label{eq:leader-reference}
\end{align}
where $x_0(t),v_0(t) \in \mathbb{R}^{n}$ denote the position and velocity of the leader, respectively.

Through Assumption \ref{assum:network-topology}, the followers in \eqref{eq:follower-dynamics} track the evolution of the leader described by \eqref{eq:leader-reference}.
%For example, a leader can be defined as a virtual reference by taking the average behavior of all followers, i.e., $x_0(t)=\frac{1}{N}\sum_{j=1}^{N}x_i(t)$, $v_0(t)=\frac{1}{N}\sum_{j=1}^{N}v_i(t)$. 
% In practice, the leader’s trajectory may evolve toward a periodic orbit, an asymptotically fixed point, or a regular (or even a chaotic) attractor. 
Generally, systems \eqref{eq:follower-dynamics} and \eqref{eq:leader-reference} are referred to as \emph{leader–follower} multi-agent systems, also known as \emph{tracking} plant.

\subsection{Review of Classical Lag Consensus Problem}

The main interest of this paper is the so-called \emph{second-order lag consensus} problem, in which the followers’ states track those of the leader with a time delay $\tau \geq 0$, also known as ``lag'' in this article \cite{wang2016pinning}. Setting $t_{\tau}:=t - \tau$, denote $\tilde{x}_i(t) := x_i(t) - x_0(t_\tau)$ and $\tilde{v}_i(t) := v_i(t) - v_0(t_\tau)$ as the lag errors with respect to the position and velocity, respectively. The classical lag consensus problem, without random noise or input failures, aims to achieve
%$x_i(t) - x_0(t_\tau)$ (resp., $v_i(t) - v_0(t_\tau)$), 
\begin{align}
\lim_{t\to\!+\!\infty}\|\tilde{x}_i(t)|=0, ~
\lim_{t\to\!+\!\infty}\|\tilde{v}_i(t)\|=0.
\label{def:lag-consensus-wo-noise}
\end{align}
%The key difference from leader–follower consensus lies in embedding the time delay directly into the leader dynamics in \eqref{eq:leader-reference}. 
This design prompts followers to predict the trajectory so that can avoid collisions and congestion.
Alternatively, we define the compact vector 
\begin{align*}
\xi(t)=(\xi_{1}(t)^{\top},\ldots,\xi_N(t)^{\top})^{\top} \in\mathbb{R}^{2Nn},
\end{align*}
where the $i$-th element $\xi_i(t)=(\tilde{x}_{i}(t)^{\top},\tilde{v}_{i}(t)^{\top})^{\top}\in\mathbb{R}^{2n}$,
then the lag error \eqref{def:lag-consensus-wo-noise} reduces to $\lim_{t\to\infty}\|\xi_i(t)\|\!=\!0$, $\forall{i}\!\in\!\mathcal{V}$.

%=========================lag consensus Meets Random Noise and Input Failures in CPS DT-PT Networks
\section{Lag Consensus under Random Noise and Input Failures in CPS PT–DT Networks}\label{sec:stochastic-lag-CPS-PT-DT}
In this section, we explore second-order lag consensus in multi-agent systems (MASs) subject to stochastic noise and intermittent input failures, within the framework of cyber-physical systems (CPS). In practice, each physical agent or \emph{physical twin} (PT), is associated with a corresponding \emph{digital twin} (DT) that resides in the cyber layer and maintains an estimated or revised version of the PT’s dynamic state \cite{cui2023resilient,liu2024digital}. In this article, we consider a simple scenario by assuming that \emph{each agent in PT is associated one-to-one with an agent in DT, forming a bijective mapping between the physical and cyber layers of the CPS and keeping the same network topology}.
%\deleted{As shown in Fig.~\ref{fig:DT-PT-MACPS}, the PT evolve according to second-order dynamics, while their states are perceived in the DT of cyber layer through stochastic couplings $\sigma_{ij}w(t)$.}

%================CPS-PT-DT-Concept========
\begin{figure}
    \centering
    \includegraphics[width=0.7\linewidth,height=5.2cm]{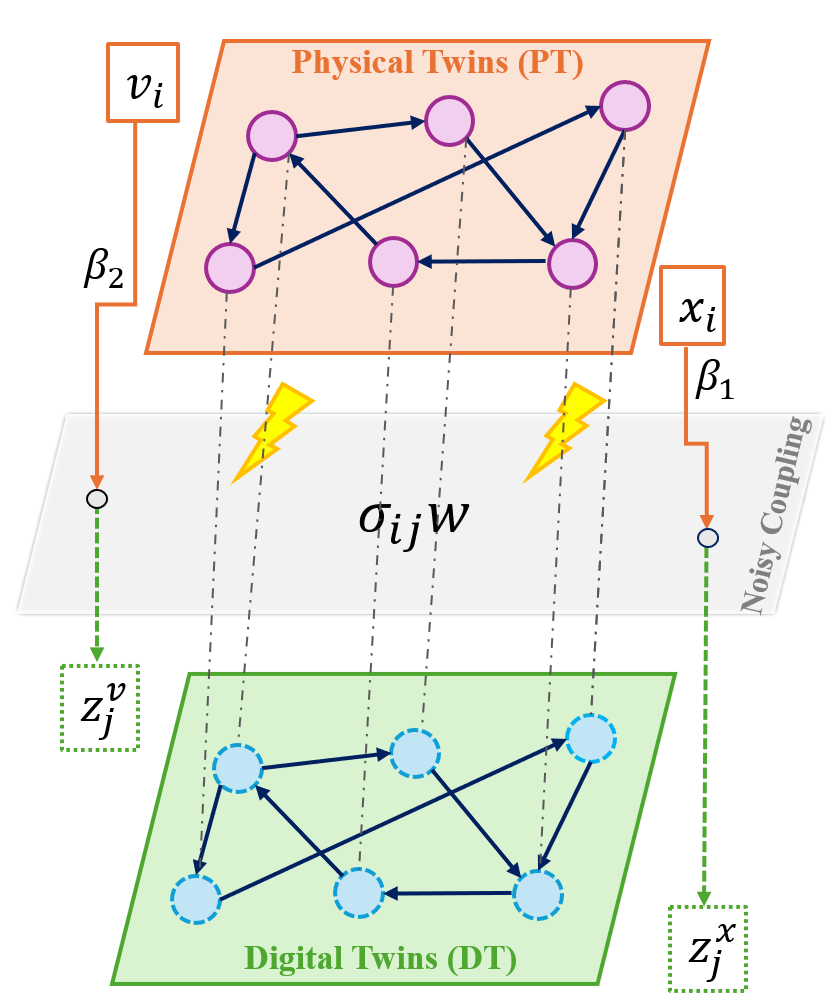}
    \caption{A sketch of PT-DT interaction in second-order agent based CPS under stochastic noise.}
    \label{fig:DT-PT-MACPS}
\end{figure}
%(e.g., small world network)
%=========================

\subsection{PT-DT Interaction in CPS: Stochastic Broadcasting}
%===========================================================
During the PT-DT interaction shown in Fig.~\ref{fig:DT-PT-MACPS}, the PT of agent $j$ represents (or evolves with) the true state $(x_j(t), v_j(t))$ in the physical layer.  
However, due to sensing jitter, disturbances, and estimation errors, the DT of agent $i$ does not access the true neighbor states directly. 
Instead, it perceives stochastic broadcasts of the physical states, modeled as
\begin{align}
\text{(DT):}\qquad 
\begin{aligned}
    z_j^x(t) &= x_j(t) + \beta_1 \sigma_{ij} (x_j(t) - x_i(t))\, w(t), \\
    z_j^v(t) &= v_j(t) + \beta_2 \sigma_{ij} (v_j(t) - v_i(t))\, w(t),
\end{aligned}
\label{eq:stochastic-nois-perb-neigbors}
\end{align}
where $z_j^x(t)$ and $z_j^v(t)$ denote the DT perceived estimates of the physical states $x_j(t)$ and $v_j(t)$ from agent $i$. Here, $w(t) \in \mathbb{R}$ denotes a white noise satisfying $\int_0^t w(s)\, ds = W(t)$, i.e., $\dot{W}(t)=w(t)$, and $W(t)$ is a standard Wiener process or Brownian motion defined on a complete probability space $(\Omega, \mathcal{F}, \mathcal{P})$ with the natural filtration $\{\mathcal{F}_t\}_{t\ge0}$. The parameters $\beta_1 \in \mathbb{R}$ and $\beta_2\in \mathbb{R}$ are the noise intensities for position and velocity, respectively. Finally, the gains $\sigma_{ij}>0$ if and only if $a_{ij}>0$, and $\sigma_{ij}=0$ otherwise. This ensures that noise is only applied along existing communication links.
Similar stochastic perturbation formulations can be found in \cite{Sun,Mu,Oksendal2003}.

% \deleted{Let $D=(d_{ij})$ denote the noise-coupling matrix with $d_{ij}=-\sigma_{ij}~(i\neq j)$, $d_{ii}={+}\sum_{j=1,j\neq{i}}^{N}\sigma_{ij}~(i=j)$.}

Then, a follower agent $i$ employs the noisy neighbor states \eqref{eq:stochastic-nois-perb-neigbors} in its dynamics \eqref{eq:follower-dynamics} instead of the actual states $x_j(t)$ and $v_j(t)$. Thus, the stochastic noise enters through the relative differences between the received DT and actual PT neighbor states in position and velocity, i.e.,
\begin{align*}
    c_1 \sum_{j \in \mathcal{N}_i} a_{ij} \big(z_j^x(t) - x_i(t)\big), \quad
    c_2 \sum_{j \in \mathcal{N}_i} a_{ij} \big(z_j^v(t) - v_i(t)\big).
\end{align*}

\subsection{PT-DT Interaction in CPS: Input Failures}
Setting an initial time $t_0 = 0$, let $[t_k, s_k]$ denote the \emph{control-active} period with duration $s_k - t_k$, while $(s_k, t_{k+1})$ is the \emph{inactive} period of length $t_{k+1} - s_k$. Define the maximal ratio of time with control input failures
\begin{align*}
\psi = \mathop{\lim}\limits_{k \to +\infty} \mathop{\sup}_{k \in \mathbb{N}}
\frac{t_{k+1} - s_k}{t_{k+1} - t_k},
\end{align*}
where $\psi \in (0,1)$, and
%denoting the maximal fraction of time with control failures; 
$\psi = 0$ reduces to continuous control. 
%Here, we focus on $\psi \in (0,1)$.

%---------------------------------------------
% Assumption: Input Failures
%---------------------------------------------
\begin{assumption}\label{assum:input-failures-intermittent}
For the intermittent input failures (possibly not periodic), 
there exist two scalars $\theta$ and $\delta$ such that 
\begin{align}
\inf_{k \in \mathbb{N}} (s_k - t_k) = \theta, \quad
\sup_{k \in \mathbb{N}} (t_{k+1} - t_k) = \delta,
\end{align}
where $0 < \theta < \delta < +\infty$. 
This assumption ensures that the duration of each failed interval 
is bounded by $\delta - \theta$; 
hence, both the control-active and failed intervals are finite. 
Under this assumption, it follows that $\psi \le 1 - {\theta}/{\delta}$.
\end{assumption}

\subsection{PT-DT Interaction in CPS: Problem Definition}

\begin{definition}[Mean-square lag consensus]\label{def:lag-consensusconsensus}
For a predefined lag $\tau \ge 0$, the multi-agent cyber–physical network 
\eqref{eq:follower-dynamics}--\eqref{eq:leader-reference} under PT-DT interaction is said to achieve second-order mean-square lag consensus if, for all $i\in\mathcal{V}$ and any initial states\footnote{Although the error is defined as $\tilde{x}_i(t) = x_i(t) - x_0(t_\tau)$ (resp. $\tilde{v}_i(t)$), its initial value depends on the leader’s past at $t=-\tau$. Formally, the leader’s initial functions over $[-\tau,0]$ may be taken in the Banach space $\mathscr{C}([-\tau,0],\mathbb{R}^{2n})$, but the followers’ error dynamics remain finite-dimensional, and their initials are vectors. For simulations, random vectors for both leader and followers suffice.}
\begin{align}
\lim_{t\to\infty} \mathbb{E}\|\tilde{x}_i(t)\| = 0,
\qquad
\lim_{t\to\infty} \mathbb{E}\|\tilde{v}_i(t)\| = 0.
\label{eq-def:stochastic-lag-consensus-noise}
\end{align}
\end{definition}
%%%%%%%%%%%%%%%%%%%%%%%%%%%%%%%%%%%%%%%%%%%%%%%%
However, stochastic lag consensus \eqref{eq-def:stochastic-lag-consensus-noise} is harder than the standard case \eqref{def:lag-consensus-wo-noise} due to stochastic perturbations and intermittent input failures, especially with large noise or weak connectivity. This motivates the following Problem~\ref{prob:stochastic-lag-consensus}.

%======================Problem present============
%\begin{tcolorbox}
\begin{problem}\label{prob:stochastic-lag-consensus}
Given the second-order MASs~\eqref{eq:follower-dynamics}–\eqref{eq:leader-reference} 
satisfying Assumptions~\ref{assum:network-topology}--%\ref{assum:Lipschitz-cond-nonlinear}--
\ref{assum:input-failures-intermittent}, 
and subject to PT-DT interactions with stochastic perturbations \eqref{eq:stochastic-nois-perb-neigbors} in the CPS framework, 
design a lag consensus protocol with control $u_i(t)$ to hedge against random noise 
and intermittent input failures, and thus achieve mean-square second-order lag consensus \eqref{eq-def:stochastic-lag-consensus-noise}.
\end{problem}
%\end{tcolorbox}

%============================Table============
% \begin{table*}[h!]
% \centering
% \caption{Roles and noise effects in PT--DT structured CPS}
% \renewcommand{\arraystretch}{1.1}
% \begin{tabular}{lccc}
% \hline
% \textbf{Layer} & \textbf{Role} & \textbf{Noise source} & \textbf{Effect} \\
% \hline
% PT  & Real agent dynamics & Noise-free & Receives $u_i(t)$ \\
% DT   & Sensing/communication & Noise $w(t)$ & Generates $z_j^{x},z_j^{v}$ \\
% Control law & Computed in DT, applied to PT & From $z_j^{x},z_j^{v}$ & Induces stochastic response \\
% \hline
% \end{tabular}
% \label{tab:dt-pt-summary}
% \end{table*}

\subsection{Resilient Control Synthesis: Lag Consensus Protocol}
We now introduce our proposed stochastic lag consensus for Problem~\ref{prob:stochastic-lag-consensus}, which is designed by a piecewise, resilient intermittent (on/off) control policy as follows
\begin{align}
&u_i(t)\notag\\
&\!=\!\begin{cases}
\,-\kappa_i\Big(c_1\big(x_i(t)-x_{0}(t_\tau )\big)+c_{2}
\big(v_{i}(t)-v_{0}(t_\tau) \big)\Big)\\
\!-\!\sum\limits_{j\in \mathcal{N}_i} \sigma_{ij}\Big({\beta_1}\big(x_{i}(t)\!-\!x_{j}(t)\big)+\beta_{2}\big(v_{i}(t)\!-\!{v_j}(t)\big)\Big)w(t),\\
{\kern 165pt} t \in [t_{k},s_{k}],\\
0,{\kern 145pt} t \in (s_{k},t_{k+1})
\end{cases}
\label{eq:lag-tracking-controller-PD}
\end{align}
where $\kappa_i > 0$ if agent $i$ receives information from the leader, and 0 otherwise. This corresponds to a pinning-control scheme in which a subset of $\mathcal{V}$ with $\ell$ leader-informed agents is selected~\cite{Porfiri,song2010second,liu1,lewis2013cooperative}.
The first term drives the system toward lag consensus in \eqref{def:lag-consensus-wo-noise}, also known as relative displacement feedback, while in this case it involves a delayed input. The second term perceives the stochastic coupling induced by noisy PT-DT exchanges \eqref{eq:stochastic-nois-perb-neigbors}.

% \deleted{That is, in the cyber-layer (DT), followers receive the leader’s delayed information and stochastic neighbor broadcasts intermittently. These hybrid modes model intermittent input failures typical in CPS-based MASs.}

% %====(CAV interpretation _REAL World PT ===
% \begin{remark}[Practical CAV interpretation]
% In a CAV setting, the PT denotes the actual vehicle dynamics $(x_i,v_i)$, while the DT denotes its cyber estimation used for coordination. The control strategy \eqref{eq:lag-tracking-controller-PD} enables each follower to track the leader’s lagged state, 
% capturing the cyber–physical desynchronization between PT and DT layers. Also, the stochastic coupling term models random communication perturbations among vehicles, and the intermittent on–off modes reflects practical time-triggered or control updates or input failures in vehicular networks.
% \end{remark}

%%------------------Main Results=============
\section{Main results: Stability Analysis}\label{sec:main-results-lag-consensus}
\subsection{Stochastic Modeling of Lag Error Dynamics}
In general, studying the convergence of leader follower trajectories is challenging. Accordingly, we focus on the resulting error dynamics for Problem~\ref{prob:stochastic-lag-consensus}, which describes the lag consensus problem via the next SDE formulation.

\begin{proposition}[Recasting Lag Error Dynamics via SDE]\label{proposit:sde_lag_consensus_error_dynamics}
Consider a class of MASs based CPS described by \eqref{eq:follower-dynamics}--\eqref{eq:leader-reference}, subject to stochastic neighbor broadcasting \eqref{eq:stochastic-nois-perb-neigbors} and a constant delay $\tau \ge 0$. Let the CPS employs the lag consensus protocol with intermittent control \eqref{eq:lag-tracking-controller-PD}.
Then, the lag error dynamics can be recast in a piecewise SDE form, referred to as the stochastic error dynamics:
\begin{align}
\begin{aligned}
\begin{cases}
\der\xi(t)\!=\!\big(\widetilde{F}_\tau(t) + (H_1 \!\otimes\! I_n)\xi(t)\big)\der{t}
            \!-\! (U \!\otimes\! I_n)\xi(t)\,\der{W(t)}, \\
            \hfill t \in [t_k, s_k],\\
\der\xi(t)\!=\!\big(\widetilde{F}_\tau(t) + (H_2 \!\otimes\! I_n)\xi(t)\big)\der{t},
\qquad\quad t \in (s_k, t_{k+1}),
\end{cases}
\label{eq:stochastic-error-dynamics}
\end{aligned}
\end{align}
where $\tilde{L}= L + K$ with $K=\mathrm{diag}\{\kappa_1,\ldots,\kappa_{\ell},0,\cdots,0\}$, and
\begin{align*}
    H_{1}=
    \begin{pmatrix}
        %\begin{smallmatrix}
            0_{N} & I_{N}\\
            -c_{1}\tilde{L} & -c_{2}\tilde{L}
       % \end{smallmatrix}
    \end{pmatrix},\quad
       H_{2}=
    \begin{pmatrix}
      %  \begin{smallmatrix}
            0_{N} & I_{N}\\
            -c_{1}L & -c_{2}L
      %  \end{smallmatrix}
    \end{pmatrix},
\end{align*}
\begin{align*}
    &F(x,v)=(f({x_1},{v_1})^{\top},\ldots,f({x_N},{v_N})^{\top})^{\top},\\
    &\bm{f}_{N}(x_0^{\tau},v_0^{\tau})=\mathbf{1}_N\otimes f((x_0(t_\tau),v_0(t_\tau)),\\
&\widetilde{F}_\tau(t) =
\begin{pmatrix} \mathbf{0}_{nN} \\[2pt]
F(x(t),v(t))-\bm{f}_{N}(x_0^{\tau},v_0^{\tau}))
\end{pmatrix}.
\end{align*}
Also, the noise matrix $D = (d_{ij}) \in \mathbb{R}^{N \times N}$ is defined as 
%$d_{ij}=-\sigma_{ij}$ for $i\neq{j}$, and $d_{ii}=\sum\nolimits_{j = 1}^N \sigma_{ij}$ for $i=j$.
\begin{align}
d_{ij} = 
\begin{cases} 
-\sigma_{ij}, & i \ne j,\\
\sum_{j=1}^N \sigma_{ij}, & i = j,
\end{cases}
~\Longrightarrow~
U= \begin{pmatrix}
            {0_N}&{0_N}\\
            {\beta_1}D &{\beta_2}D
    \end{pmatrix}.
    \label{eq:noise-matrix-D-U}
\end{align}
Thus, realizing lag consensus in Problem~\ref{prob:stochastic-lag-consensus} equals ensuring the mean-square stability of the error dynamics in \eqref{eq:stochastic-error-dynamics}.
\end{proposition}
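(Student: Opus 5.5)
The plan is a direct computation: subtract the delayed leader dynamics from each follower's dynamics, rewrite the differences through the lag errors, and read off the drift and diffusion matrices, treating the white noise $w(t)\,\der t$ as the It\^{o} increment $\der W(t)$.

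\textbf{Step 1 (lag errors).} Since $\tau$ is constant, \eqref{eq:leader-reference} evaluated at $t_\tau=t-\tau$ gives $\der x_0(t_\tau)=v_0(t_\tau)\der t$ and $\der v_0(t_\tau)=f(x_0(t_\tau),v_0(t_\tau))\der t$. Hence $\der\tilde{x}_i=\tilde{v}_i\,\der t$ for every $i$ and on both types of interval. This produces the block row $(0_N\ \ I_N)$ of $H_1$ and $H_2$, and the zero upper block $\mathbf{0}_{nN}$ of $\widetilde{F}_\tau$.

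\textbf{Step 2 (velocity equation).} In the velocity equation I will use three facts.
\begin{itemize}
\item Both $L\mathbf{1}_N=0$ and $D\mathbf{1}_N=0$ hold by construction, so $\sum_j a_{ij}(x_j-x_i)=-(L\otimes I_n)_i\tilde{x}$, and similarly $\sum_j\sigma_{ij}(x_i-x_j)=(D\otimes I_n)_i\tilde{x}$. The delayed leader state cancels in every relative difference, and the same identities hold for $v$.
\item The pinning term of \eqref{eq:lag-tracking-controller-PD} equals $-\kappa_i(c_1\tilde{x}_i+c_2\tilde{v}_i)$.
\item The nonlinearity contributes $f(x_i,v_i)-f(x_0(t_\tau),v_0(t_\tau))$, which stacks to the lower block of $\widetilde{F}_\tau(t)$.
\end{itemize}
On $[t_k,s_k]$ these combine into the drift $-c_1(\tilde{L}\otimes I_n)\tilde{x}-c_2(\tilde{L}\otimes I_n)\tilde{v}$ with $\tilde{L}=L+K$. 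The stochastic part of the protocol becomes $-\big(\beta_1(D\otimes I_n)\tilde{x}+\beta_2(D\otimes I_n)\tilde{v}\big)w(t)\,\der t=-\big((\beta_1D\ \ \beta_2D)\otimes I_n\big)\xi\,\der W(t)$, which gives the lower block row of $U$.

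On $(s_k,t_{k+1})$ the input is zero. Only the Laplacian coupling survives, which yields $H_2$, and no diffusion term appears.

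\textbf{Step 3 (main obstacle: bookkeeping of the noisy perception terms).} The perceived states in \eqref{eq:stochastic-nois-perb-neigbors} also inject $c_1a_{ij}\beta_1\sigma_{ij}(x_j-x_i)w$ and the analogous velocity term into the coupling. To match the stated form, I will interpret these as carried by the DT control channel, which is active only on $[t_k,s_k]$. Because they have exactly the same relative-difference structure, they can be absorbed into the noise gains. Concretely, I redefine $\sigma_{ij}$ so that it collects the combined coefficient, that is, the protocol's own noise term plus $c_1a_{ij}\beta_1\sigma_{ij}$ for positions and the analogous term with $c_2$ and $\beta_2$ for velocities. This keeps the sparsity pattern $\sigma_{ij}>0\iff a_{ij}>0$ and the zero-row-sum property of $D$. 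I expect this identification to be the delicate point and will state it explicitly.

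\textbf{Step 4 (ordering and mean-square equivalence).} The matrices $H_1$, $H_2$ and $U$ are written in the ordering (all positions; all velocities). I will either take $\xi$ in that ordering or note that a fixed permutation matrix maps the agent-wise ordering to it. Neither choice affects stability.

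It remains to connect mean-square stability to Definition~\ref{def:lag-consensusconsensus}. Mean-square stability means $\mathbb{E}\|\xi(t)\|^2\to0$. By Jensen's inequality, $\mathbb{E}\|\tilde{x}_i\|\le(\mathbb{E}\|\xi\|^2)^{1/2}\to0$, and the same bound holds for $\tilde{v}_i$. Therefore \eqref{eq-def:stochastic-lag-consensus-noise} follows, which proves the stated equivalence for Problem~\ref{prob:stochastic-lag-consensus}.
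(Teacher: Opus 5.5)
Your Steps 1, 2 and 4 match the computation the paper has in mind. The paper omits its proof, saying it follows by differentiating the lag errors under \eqref{eq:lag-tracking-controller-PD}. Those steps already give $H_1$, $H_2$ and the diffusion $-(U\otimes I_n)\xi\,\der W(t)$ with the original $D$: the stated $U$ is exactly the second term of the protocol, read with $w(t)\,\der t=\der W(t)$.

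The gap is Step 3. You are right that substituting the perceived states \eqref{eq:stochastic-nois-perb-neigbors} literally into the coupling sums of \eqref{eq:follower-dynamics} produces extra noise. Your absorption, however, does not work.
\begin{itemize}
\item The combined off-diagonal coefficients are $\beta_1\sigma_{ij}(1+c_1a_{ij})$ in the position block and $\beta_2\sigma_{ij}(1+c_2a_{ij})$ in the velocity block. A single redefined $\sigma_{ij}$ therefore cannot give one matrix $D$ serving both blocks of $U$ unless $c_1=c_2$. If you also rescale $\beta_1,\beta_2$, you still need $(1+c_1a_{ij})/(1+c_2a_{ij})$ to be the same on every edge. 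In general the lower block is $\big(\beta_1(D+c_1D_a)\ \ \beta_2(D+c_2D_a)\big)$, where $D_a$ is the Laplacian of the weights $a_{ij}\sigma_{ij}$.
\item Even when the rescaling works, the resulting matrix is not the $D$ of the statement. That $D$ is built from the given $\sigma_{ij}$ and feeds condition (i) of Theorem~\ref{thm:stochastic-lag-consensus-CPS}.
\item Assigning the perception noise to the control channel on $[t_k,s_k]$ is a modelling change, not a derivation. The coupling sums in \eqref{eq:follower-dynamics} stay on during $(s_k,t_{k+1})$; you keep their deterministic part there to obtain $H_2$. Under the literal reading, the velocity equation on that interval still carries $-\big(c_1\beta_1(D_a\otimes I_n)\tilde{x}+c_2\beta_2(D_a\otimes I_n)\tilde{v}\big)\der W(t)$, so it is not an ODE.
\end{itemize}

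To close the gap, drop the rescaling and state explicitly the convention under which the proposition holds. The coupling sums in \eqref{eq:follower-dynamics} act on the true neighbour states, and the DT perception noise enters only through the second term of \eqref{eq:lag-tracking-controller-PD}, which is active only on $[t_k,s_k]$. With that convention your Steps 1--2 give \eqref{eq:stochastic-error-dynamics} verbatim. Under the literal substitution, the correct conclusion is a different switched SDE with diffusion on both phases, not the stated one.

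Two smaller points.
\begin{itemize}
\item Step 4 shows only that mean-square stability implies \eqref{eq-def:stochastic-lag-consensus-noise}. The converse fails in general, because $\mathbb{E}\|\xi(t)\|\to0$ does not force $\mathbb{E}\|\xi(t)\|^2\to0$. So you have proved sufficiency, not equivalence; sufficiency is all that Theorem~\ref{thm:stochastic-lag-consensus-CPS} uses.
\item Step 1 applies \eqref{eq:leader-reference} at times $t_\tau<0$. For $t\in[0,\tau)$ you therefore need the leader's history on $[-\tau,0]$ to solve \eqref{eq:leader-reference}, for instance as the backward solution. For an arbitrary continuous initial function, the error equation holds only for $t\ge\tau$.
\end{itemize}
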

%=======================
\begin{proof} 
We omit the proof of Proposition~\ref{proposit:sde_lag_consensus_error_dynamics}, as it follows by taking the derivative of the Definition \ref{def:lag-consensusconsensus} together with the well-posed Problem~\ref{prob:stochastic-lag-consensus} and controller in \eqref{eq:lag-tracking-controller-PD}.
\end{proof}
%=======================

Obviously, when the agent-based CPSs in PT-DT experience random noise and input failures, the error system \eqref{eq:stochastic-error-dynamics} evolves as an SDE with both \emph{drift} and \emph{diffusion} terms if the controller is active; otherwise, it reduces to an ordinary differential equation (ODE) containing only the drift term. 
% Therefore, solving Problem~\ref{prob:stochastic-lag-consensus} is equivalent to ensuring the mean-squre strability of error dynamics

% In other words, the MAS network achieves mean-square (or almost-sure) lag consensus, which is equivalent to the trivial solution ${\xi}(t;\phi)$ of the above SDDE \eqref{eq:stochastic-error-dynamics} being mean-square (or almost-sure) stable.

%================PesudoSDDE Remarks=============
\begin{remark}
For the error dynamics \eqref{eq:stochastic-error-dynamics}, the delayed leader term $ f({x_0}(t_\tau),{v_0}(t_\tau))$ is exogenous inputs (does not depend on $\xi(t)$). In fact, the error behavior is defined relative to a delayed reference (lag), but its \emph{error dynamics depends only on the current followers' states rather than its past historical data trajectory}. As a result, the error system is essentially an SDE with an exogenous input delay, i.e.,~``{PseudoSDDE}''.%rather than a true SDDE. 
 Consequently, the equilibrium of the error dynamics is shifted by the delay $\tau$, yet the stability of the system is delay-independent.
\end{remark}

%%=============Main Results (Stability)========
\subsection{Lyapunov based Mean-Square Stability Analysis}%\label{sec:main-results-lag-consensus}
As result in Proposition~\ref{proposit:sde_lag_consensus_error_dynamics}, the stochastic lag consensus of second-order agent CPS \eqref{eq:follower-dynamics}--\eqref{eq:leader-reference} under noisy~\eqref{eq:stochastic-nois-perb-neigbors} can be recast as the mean-square stability of the error dynamics \eqref{eq:stochastic-error-dynamics}, which implies that the controller \eqref{eq:lag-tracking-controller-PD} is effective for Problem~\ref{prob:stochastic-lag-consensus} if it can ensure the stability and robustness of the lag error dynamics. Our main result is as follows.

%-=--=========Theroem===========
% \begin{theorem}\label{thm3}
% Suppose that Assumptions \ref{assum:Lipschitz-cond-nonlinear}--\ref{assum:input-failures-intermittent} hold. The second-order multi-agent systems \eqref{eq:follower-dynamics} and \eqref{eq:leader-reference} with random noise and input failures under intermittent controller \eqref{eq:lag-tracking-controller-PD} can achieve mean square lag consensus if
% \begin{enumerate}
%     \item[i)]  $\gamma\geq0$,
%     \item[ii)]  $\bar{\mu}_{1}\theta-\bar{\mu}_{2}(\delta-\theta)>0$
% \end{enumerate}
% % \begin{align*}
% % &(\mathrm{i}){\kern 4pt}{{2{\lambda _{\min }}(R_1)> \|{U^T}PU\|}},\\
% % &(\mathrm{ii}){\kern 4pt}{{{\bar \mu }_1}\theta- {{\bar \mu }_2}({\delta}-{\theta})}>0,
% % \end{align*}
% where $\gamma := 2 \lambda_{\min}(R_1) - \lambda_{\max}(U^\top P U)$, $\bar{\mu}_{1}=\gamma/\lambda_{\max}(P_2)$, $\bar{\mu}_2=2\lambda_{\max}(R_2)/\lambda_{\min}(P_1)$ with the appropriate matrices ${P}$, ${P_1}$, ${P_2}$, ${R_1}$ and ${R_2}$ are well-defined in (15) 
% \end{theorem}
%========================Theorem =
\begin{theorem}\label{thm:stochastic-lag-consensus-CPS}
Based on error dynamics \eqref{eq:stochastic-error-dynamics},
the lag consensus protocol $u_i(t)$ designed in \eqref{eq:lag-tracking-controller-PD} solves Problem~\ref{prob:stochastic-lag-consensus}, 
that is, ensures mean-square stability and achieves second-order lag consensus of CPS, if the next conditions are satisfied:
\begin{enumerate}
    \item[(i)]  $\gamma := 2 \lambda_{\min}(R_1) - \lambda_{\max}(U^\top P U) \ge 0$,
    \item[(ii)] $\bar{\mu}_{1}\theta - \bar{\mu}_{2}(\delta-\theta) > 0$,
\end{enumerate}
where $\bar{\mu}_{1} = \gamma / \lambda_{\max}(P_2)$, $\bar{\mu}_2 = 2 \lambda_{\max}(R_2)/\lambda_{\min}(P_1)$, 
and the parameters and matrices $P$, $P_1$, $P_2$, $R_1$, and $R_2$ are well-defined in \eqref{eq:Lyapunov-P-matrix}, \eqref{eq:V_fun_lower-upper-bound} and \eqref{eqs:parameters-matrices}.
\end{theorem}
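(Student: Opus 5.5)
The plan is a Lyapunov argument for the piecewise SDE \eqref{eq:stochastic-error-dynamics}, combining Itô's formula on the control-active intervals with a growth bound on the failure intervals. The rate is then stitched across one period $[t_k,t_{k+1})$ using Assumption~\ref{assum:input-failures-intermittent}.

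\textbf{Lyapunov function and bounds.} Take the quadratic function $V(\xi)=\xi^\top (P\otimes I_n)\xi$. Here $P>0$ is the matrix defined in \eqref{eq:Lyapunov-P-matrix}, presumably a block matrix built from $\tilde L$, $c_1$, $c_2$ as is standard for second-order pinning consensus. By \eqref{eq:V_fun_lower-upper-bound}, $V$ satisfies the sandwich bounds
\begin{align*}
\lambda_{\min}(P_1)\|\xi\|^2\le V(\xi)\le\lambda_{\max}(P_2)\|\xi\|^2.
\end{align*}

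\textbf{Control-active intervals, $t\in[t_k,s_k]$.} Apply the Itô formula to get
\begin{align*}
\mathcal{L}V = 2\xi^\top(P\otimes I_n)\big(\widetilde F_\tau+(H_1\otimes I_n)\xi\big) + \xi^\top(U^\top P U\otimes I_n)\xi .
\end{align*}
Assumption~\ref{assum:Lipschitz-cond-nonlinear} handles the cross term. Since $\widetilde F_\tau$ only involves $f(x_i,v_i)-f(x_0(t_\tau),v_0(t_\tau))$, we have $\|f(\cdot)-f(\cdot)\|\le\rho_1\|\tilde x_i\|+\rho_2\|\tilde v_i\|$, and Young's inequality absorbs this into a quadratic form. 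The result should be $2\xi^\top(P\otimes I_n)(\widetilde F_\tau+(H_1\otimes I_n)\xi)\le -2\xi^\top(R_1\otimes I_n)\xi$, with $R_1$ as in \eqref{eqs:parameters-matrices}. This gives
\begin{align*}
\mathcal{L}V\le -\gamma\|\xi\|^2\le -\bar\mu_1 V .
\end{align*}
Taking expectations (Dynkin's formula, with the martingale term vanishing since coefficients are linear-growth) yields $\mathbb{E}V(t)\le e^{-\bar\mu_1(t-t_k)}\mathbb{E}V(t_k)$.

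\textbf{Failure intervals, $t\in(s_k,t_{k+1})$.} There is no diffusion. With $R_2$ bounding the indefinite form coming from $H_2$ and the Lipschitz term, we get $\dot V\le 2\lambda_{\max}(R_2)\|\xi\|^2\le\bar\mu_2 V$. Hence $\mathbb{E}V(t)\le e^{\bar\mu_2(t-s_k)}\mathbb{E}V(s_k)$.

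\textbf{Stitching the intervals.} Use continuity of $V$ at switching instants. Over one period,
\begin{align*}
\mathbb{E}V(t_{k+1})\le e^{-\bar\mu_1(s_k-t_k)+\bar\mu_2(t_{k+1}-s_k)}\mathbb{E}V(t_k)\le e^{-\epsilon}\mathbb{E}V(t_k),
\end{align*}
where $\epsilon=\bar\mu_1\theta-\bar\mu_2(\delta-\theta)>0$ by condition (ii). This uses $s_k-t_k\ge\theta$ and $t_{k+1}-s_k\le\delta-\theta$. For the latter, if an active interval exceeds $\theta$, the extra decay only helps, since $(t_{k+1}-t_k)\le\delta$ forces the failed part to be at most $\delta-(s_k-t_k)$.

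Iterating gives the following. For $t\in[t_k,t_{k+1})$, with $k\ge t/\delta-1$, we have $\mathbb{E}V(t)\le e^{\bar\mu_2(\delta-\theta)}e^{-\epsilon k}\mathbb{E}V(0)$, i.e.\ exponential decay at rate $\epsilon/\delta$. Then $\mathbb{E}\|\xi\|^2\le \mathbb{E}V/\lambda_{\min}(P_1)\to0$ exponentially. Jensen's inequality gives $\mathbb{E}\|\tilde x_i\|,\mathbb{E}\|\tilde v_i\|\to0$, which by Proposition~\ref{proposit:sde_lag_consensus_error_dynamics} is Definition~\ref{def:lag-consensusconsensus}.

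\textbf{Main obstacle.} The hard step is the active-interval estimate: showing that the drift cross term, including the Lipschitz contribution of $\widetilde F_\tau$, is bounded by $-2\xi^\top (R_1\otimes I_n)\xi$ with $R_1>0$. This requires a correct choice of $P$ relative to $H_1$. Since $\tilde L$ need not be symmetric under directed graphs, I would first try $P=\begin{pmatrix}\alpha \Xi & \Xi\\ \Xi & \Xi\end{pmatrix}$-type constructions with $\Xi$ a positive diagonal matrix making $\Xi\tilde L+\tilde L^\top\Xi>0$. Such a $\Xi$ exists by Assumption~\ref{assum:network-topology}, as $\tilde L$ is then a nonsingular M-matrix. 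I would then verify the block inequality via Schur complements. Everything else is the standard intermittent-control comparison argument.
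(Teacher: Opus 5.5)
Your outline follows the paper's proof. It uses the same quadratic $V$ (the paper carries a factor $\tfrac12$), the same generator estimates giving exactly $\bar\mu_1$ on $[t_k,s_k]$ and $\bar\mu_2$ on $(s_k,t_{k+1})$, and the same per-period factor $e^{-(\bar\mu_1\theta-\bar\mu_2(\delta-\theta))}$. Your treatment of times inside a period (overshoot $e^{\bar\mu_2(\delta-\theta)}$, global rate $\epsilon/\delta$) is more careful than the paper's ending. The paper writes decay instead of growth on the failure intervals. It then claims $\mathbb{E}V_\xi(t)\le\bar M e^{-\breve\mu t}$ with $\breve\mu=\min\{\bar\mu_1,\bar\mu_2\}$, a rate the per-period estimate does not support.

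The real problem is your ``main obstacle'' paragraph. Conditions (i)--(ii) refer to the specific $P$ in \eqref{eq:Lyapunov-P-matrix} and the $R_1,R_2,P_1,P_2$ built from it. Switching to a $\Xi$-weighted $P$ via M-matrix theory would therefore prove a different criterion, not this theorem, and it is unnecessary.

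For the given $P$, the active-interval estimate is a direct computation. The off-diagonal block of $(PH_1)^s$ is $\tfrac12\big[(2c_1c_2\tilde L^s-c_1c_2\tilde L)-c_1c_2\tilde L^{\top}\big]=0$, so $-(PH_1)^s=\mathrm{diag}(c_1^2\tilde L^s,\ c_2^2\tilde L^s-c_1I_N)$. Only the second block row of $P$ meets $\widetilde F_\tau$, and Lipschitz plus Young give $(c_1\tilde x+c_2\tilde v)^\top(F-\bm f_N)\le\Delta_1\|\tilde x\|^2+\Delta_2\|\tilde v\|^2$. This leaves exactly $-R_1$.

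Non-symmetry of $\tilde L$ is harmless because only $\tilde L^s$ survives. Also, $R_1\succeq0$ is not something to derive from Assumption~\ref{assum:network-topology}. It is implied by (i), since $U^\top PU=c_2(\beta\beta^\top)\otimes D^\top D\succeq0$ with $\beta=(\beta_1,\beta_2)^\top$. The second diagonal block of $R_1$ then gives $\lambda_{\min}(\tilde L^s)\ge(c_1+\Delta_2)/c_2^2>c_1/(2c_2^2)$. That is exactly $P\succ0$, so you need not assume it either.

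The step that does need care is the failure-interval bound, which you asserted without checking. The Lipschitz term is \emph{not} dominated by $\xi^\top(S\otimes I_n)\xi$: take $\rho_2=0$, $f(x,v)=-\rho_1x$, $\tilde v=-(2c_1/c_2)\tilde x$. So $\dot V\le2\xi^\top(R_2\otimes I_n)\xi$ is false as a quadratic-form inequality, and the paper's intermediate step has the same flaw. The eigenvalue bound $\dot V\le2\lambda_{\max}(R_2)\|\xi\|^2$ still holds once you pass to the nonnegative vector $y$ of block norms $\|\tilde x_i\|,\|\tilde v_i\|$. Since $(PH_2)^s$ has nonnegative off-diagonal entries and lies entrywise below $R_2-S$, you get $\dot V\le2y^\top R_2y\le2\lambda_{\max}(R_2)\|\xi\|^2$.
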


\begin{proof}
    See Appendix~\ref{appendix-proof-thm}.
\end{proof}
\begin{remark}
Theorem \ref{thm:stochastic-lag-consensus-CPS} provides sufficient conditions, where condition (i) for tolerable stochastic noise bound in DT perceptions and condition (ii) for intermittent input failures, ensuring that the multi-agent CPS achieves mean-square lag consensus despite random noise and intermittent control transmission between the leader and followers.
\end{remark}

%===================LMI for Theorem=======
% \begin{remark}[LMI Form]
% To ensure mean-square stability of the error dynamics in Theorem~\ref{thm:stochastic-lag-consensus-CPS}, one can equivalently formulate the conditions using linear matrix inequalities (LMIs). Specifically, find symmetric positive definite matrices $P_1, P_2 > 0$ such that
% \[
% R_1 \otimes I_n - \frac{1}{2} U^\top P U >0, \qquad
% P_1, P_2 > 0,
% \]
% where $R_1$, $U$, and $P$ are defined in \eqref{eqs:parameters-matrices} and \eqref{eq:noise-matrix-D-U}.  

% Satisfying this LMI guarantees that 
% \(\gamma = 2 \lambda_{\min}(R_1) - \lambda_{\max}(U^\top P U) \ge 0\), 
% so that the lag-consensus protocol \eqref{eq:lag-tracking-controller-PD} achieves mean-square second-order lag consensus. This formulation allows systematic computation of controller gains $c_1, c_2, \kappa_i$ and verification of robustness under stochastic perturbations.
% \end{remark}

\section{Numerical Simulations}\label{sec:numerical-examples}
In this section, we illustrate the proposed second-order lag consensus protocol for the multi-agent CPS in numerical examples. We conduct the simulations in Matlab using {Runge-Kutta~4} with the stochastic setup~\cite{higham2001algorithmic}.
%consider several numerical examples to illustrate the proposed second-order lag consensus protocol problem of the multi-agent cyber-physical systems, that is, the stability analysis of error dynamics in terms of SDE. 
\begin{example}[Chua's circuit]
 Consider the nonlinear dynamics $f(x_i,v_i)$ that is modeled as Chua's circuit, which can be thought of as an analog of a micro-electric circuit in smart grids (or as a simple model for studying dynamics in power networks) in real-world physical layers (PT). Thus, the dynamics in \eqref{eq:follower-dynamics} is given by
\begin{align*}
f({x_i},{v_i}) = 
{\small
\left( \begin{array}{l}
9\big(v_{i2}- \frac{{2}}{7}v_{i1}-\frac{{3}}{{14}}(|v_{i1} + 1| - |v_{i1} + 1|) \big),\\
v_{i1} - v_{i2} + v_{i3},\\
-\frac{100}{7}v_{i2},
\end{array} \right)}
\end{align*}
where 
${x_{i}(t)}, {v_{i}(t)}\in \mathbb{R}^{3}$ are the position and velocity states of agents $i\!=\!0,1,\cdots,9$. That is, the network consists of agent $i = 0$ serving as the leader and the root node, while the remaining $N = 9$ agents are follower agents. In this example, we take initial states as random vectors $x_i(0),v_i(0)\in\texttt{rand}(0,1)$, and set the parameters $c_1=c_2=0.2$. Here the Lipschitz constants can be computed as $\rho_1=0$, $\rho_2=\|{\partial{f}}/{\partial{v}}\|=14.28$. Besides, the Laplacian matrix $L$ of network topology is defined as follows
\begin{align*}
L=\begin{pmatrix}
%\begin{smallmatrix}
   0&0&0&0&0&0&0&0&0\\
0&0&0&0&0&0&0&0&0\\
{ - 3}&0&3&0&0&0&0&0&0\\
{ - 1}&{ - 3}&0&6&{ - 2}&0&0&0&0\\
0&{ - 4}&0&0&4&0&0&0&0\\
0&0&{ - 3}&0&0&3&0&0&0\\
0&0&0&0&0&0&1&{ - 1}&0\\
0&0&0&{ - 2}&0&0&0&2&0\\
0&0&0&0&{ - 3}&0&0&0&3
%\end{smallmatrix}
\end{pmatrix},
\end{align*}

In the above network, it is clear that the agents indexed by $\{1,2\}$ do not receive information from other follower nodes. By pinning control these first two agents with the leader, the associated augmented graph $\bar{\mathcal{G}}$ forms a directed spanning tree with the leader as the root node. Hence, we try to adopt $K=\text{diag}(6,6,0,\ldots,0)$ with $\kappa_1=\kappa_2=6$. Also, we take  noise densities $\beta_1=\beta_2=0.2$ that gives the noise coupling matrix $D=L$ and $U$ in \eqref{eq:noise-matrix-D-U}.

% then we can see that those agent indexed by $\{1,2,3,8,9\}$ is can be viewed as control or pinning node candadtions, then the augmented graph $\bar{\mathcal{G}}$ has a directed spanning tree, in which the leader is as the root node. Hence

% Therefore, we adopt $K=\text{diag}(5,5,5,0,\ldots,0,5,5)$. 

% such that by only 
% Regarding the above netwok, it is easily see that the agents indexed by $\{1,2\}$ are isolated agents, and if we set the leader agent to pin or control such two agents, then the related argument graph hasing a diret sapnning tree. 

% controlling the first two agents making the leader agent being a root agent, the argument graph hasing a driet spanning tree. Hence, we here adopt $K=\text{diag}(5,5,0,\ldots,0)$, and noise coupling matrix $D=L$, and the noise densities $\beta_1=\beta_2=0.2$.
% since the dynamics without position information  we can directly, tne noise peratuabution not affect on  position as show nin the 

\begin{figure*}[!htbp]
    \centering
    \begin{subfigure}[b]{0.45\linewidth}
        \centering
        \includegraphics[width=\linewidth]{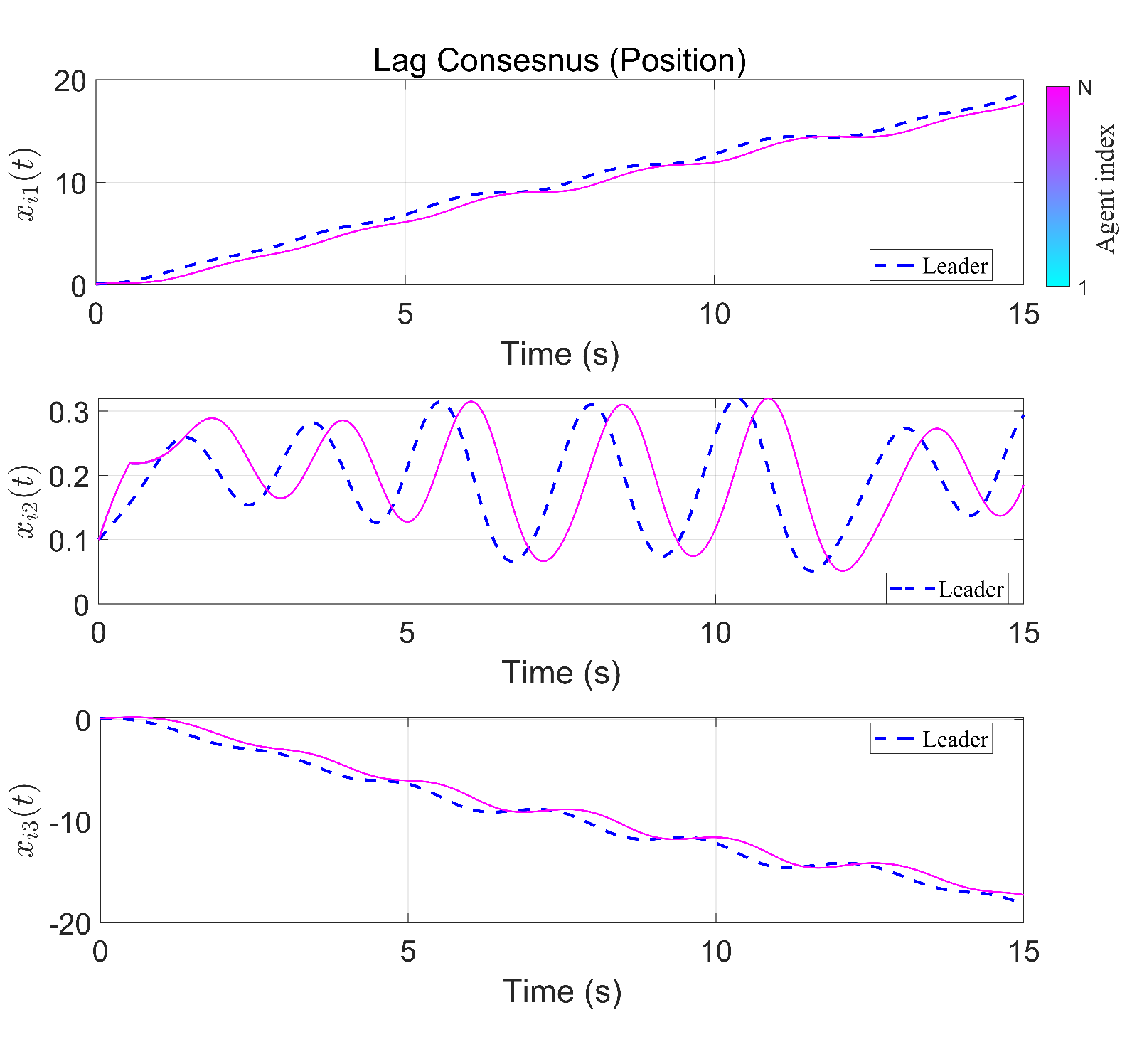}
        \caption{Lag consensus in position}
        \label{fig:lag-position}
    \end{subfigure}
    \begin{subfigure}[b]{0.45\linewidth}
        \centering
        \includegraphics[width=\linewidth]{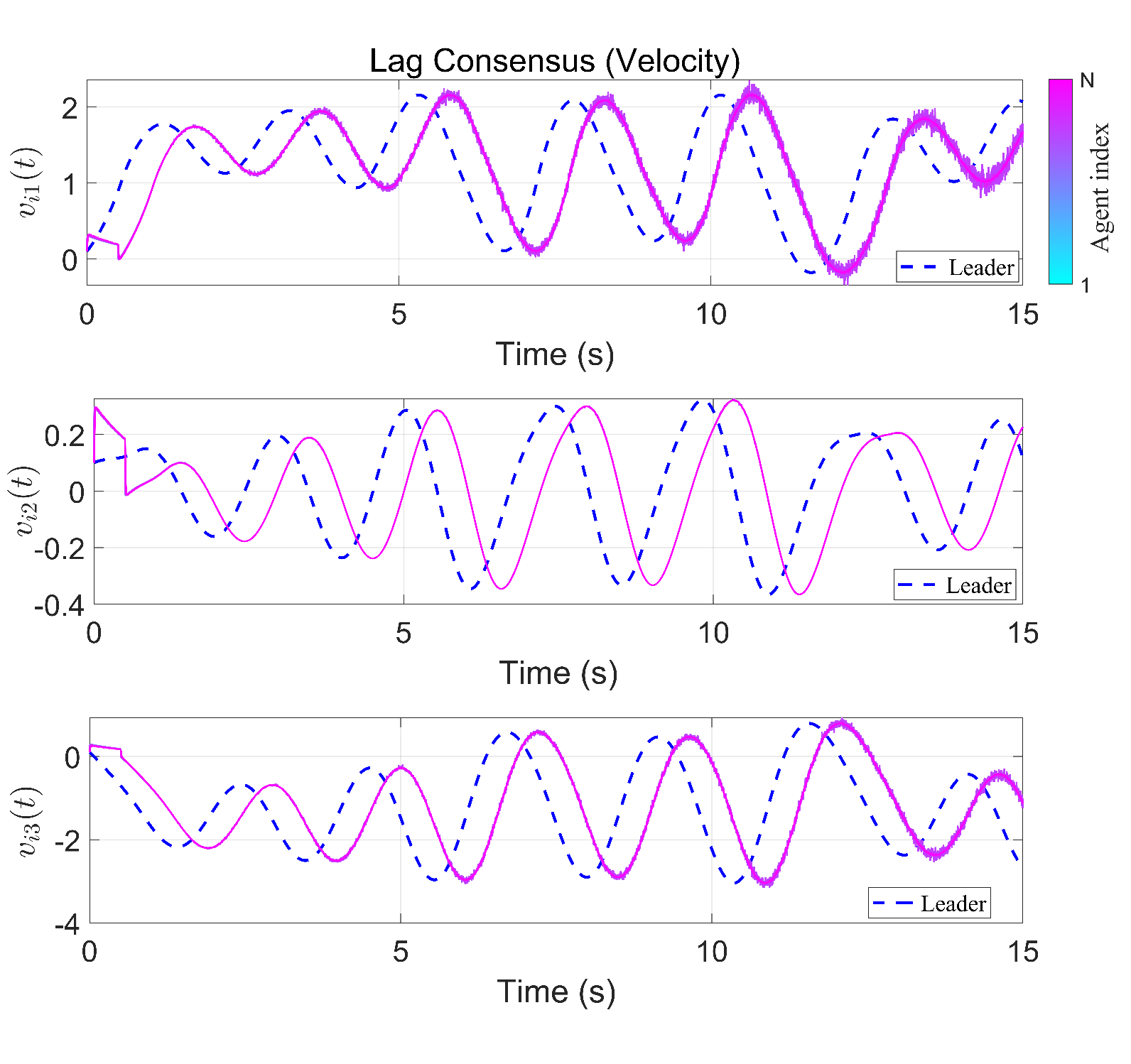}
        \caption{Lag consensus in velocity}
        \label{fig:lag-velocity}
    \end{subfigure}
    \begin{subfigure}[b]{0.45\linewidth}
        \centering
        \includegraphics[width=\linewidth]{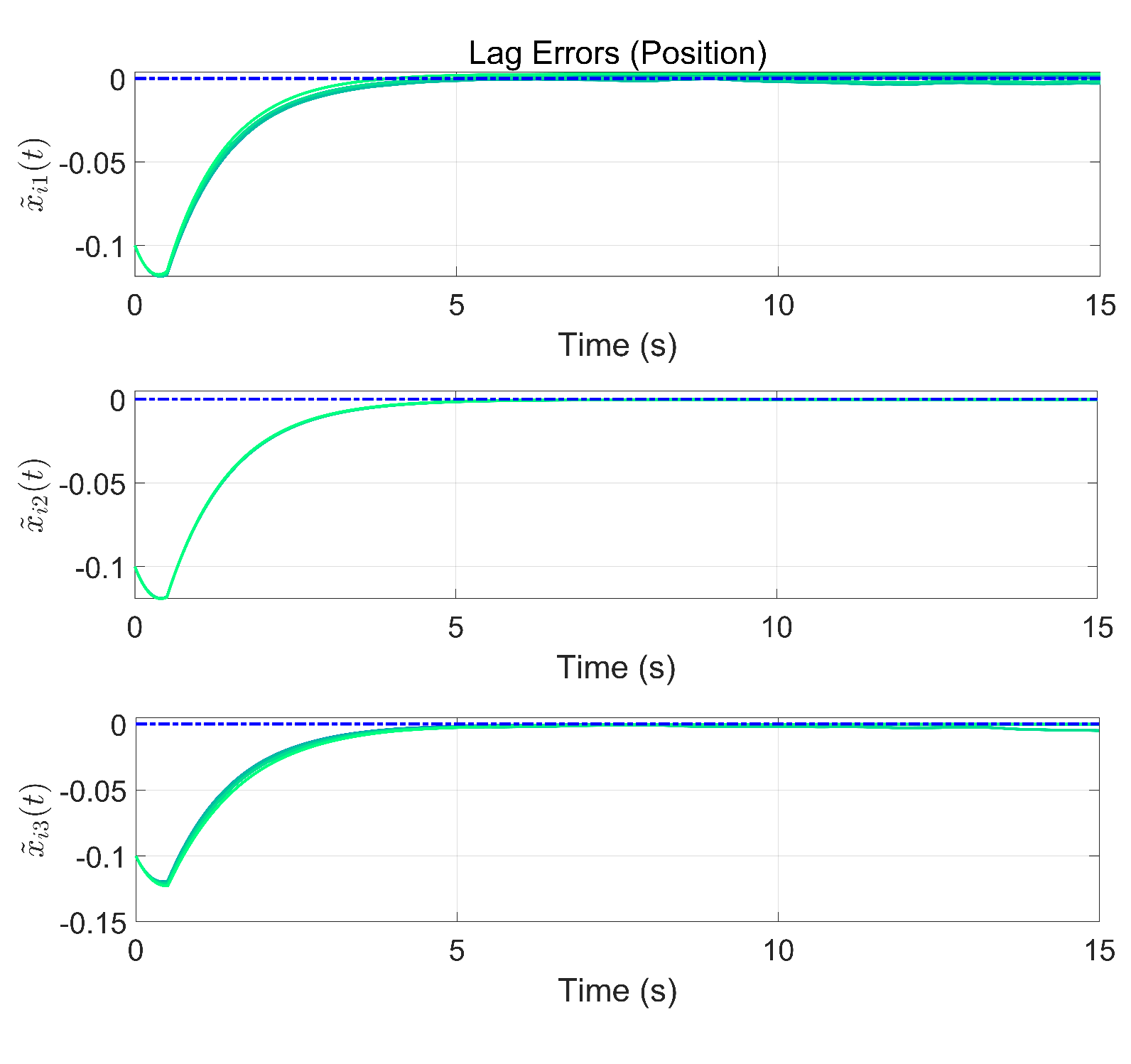}
        \caption{Lag errors in position}
        \label{fig:lag-pos-error}
    \end{subfigure}
    \begin{subfigure}[b]{0.45\linewidth}
        \centering
        \includegraphics[width=\linewidth]{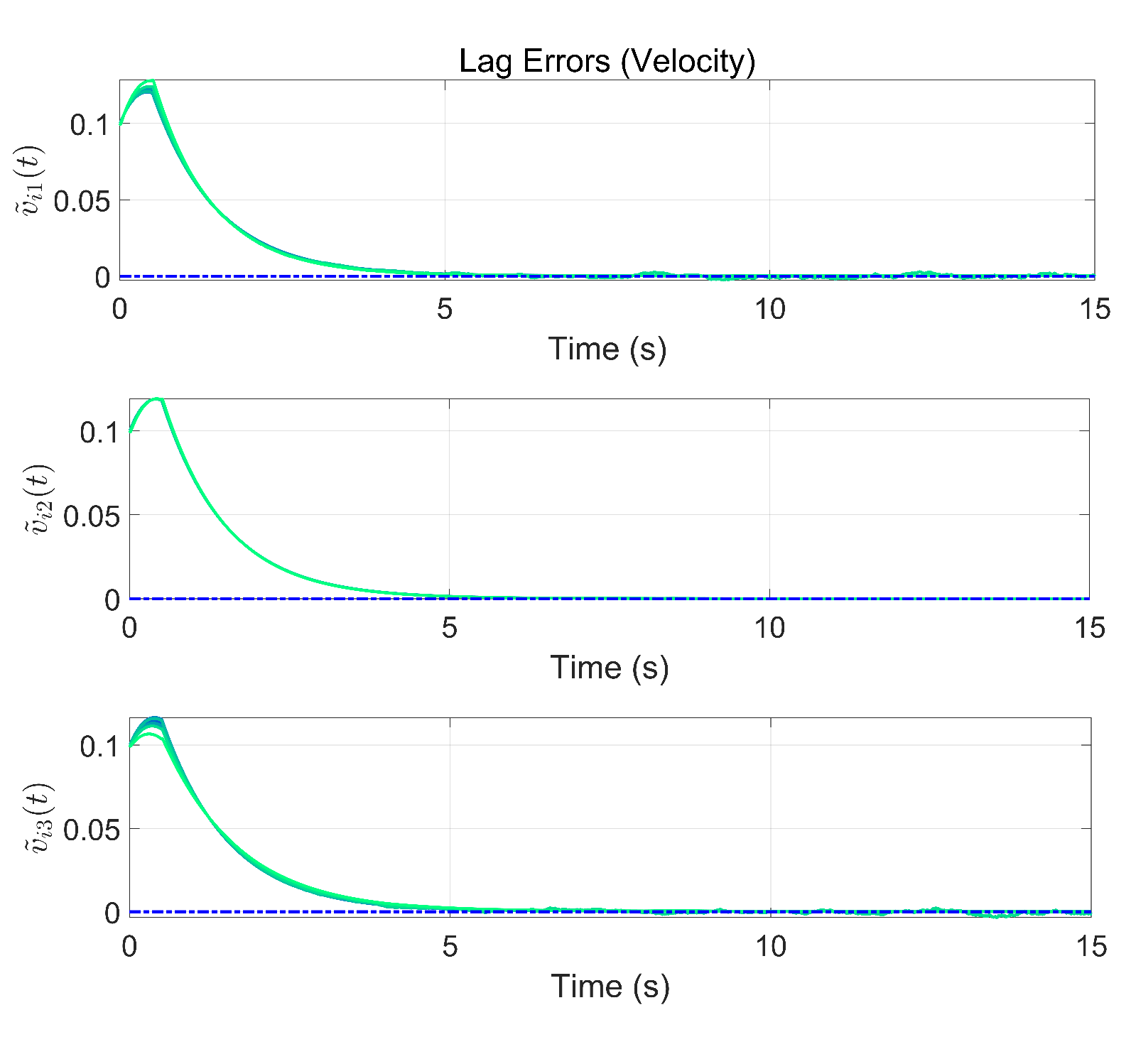}
        \caption{Lag errors in velocity}
        \label{fig:lag-vel-error}
    \end{subfigure}
    \caption{Lag consensus with delay $\tau=1$ {s}: (a) positions, (b) velocities, (c) position lag errors, and (d) velocity lag errors.}
    \label{fig:lag-and-lag-errors}
\end{figure*}

%Besides, we set the intermittent input failures
% =============================================
% Intermittent Control with \psi == 0.7
% =============================================
We simulate the system on the time horizon $t \in [0, 15] \ {s}$, and the intermittent input failures are randomly generated according to Assumption~\ref{assum:input-failures-intermittent}, in which the failure ratio is bounded by $\psi = 0.7$, 
%meaning that the control input may fail for at most 70\% of the total time.
Set $\theta = 0.3$ and $\delta = 1.0$. For each cycle $k = 0, 1,\dots$: $T_k\sim\mathcal{U}[0.3, 1.0]$, $T^{c}_k\sim\mathcal{U}[0.3, T_k]$, $t_{k+1} = t_k + T_k, s_k = t_k + T^{c}_k,$
with $t_0 = 0$ {s}. 

According to Theorem~\ref{thm:stochastic-lag-consensus-CPS}, we adjust the related parameters to meet conditions (i) and (ii). 
With the controller \eqref{eq:stochastic-nois-perb-neigbors} implemented as discussed above, Fig.~\ref{fig:lag-position} and Fig.~\ref{fig:lag-velocity} show the lag consensus simulation, where nine follower agents track the leader’s trajectory with the prescribed delay $\tau=1$ s in position and velocity, respectively. Because the nonlinear dynamics do not use position information (that is, $\rho_1=0$), the stochastic perturbations do not influence the positions, while the velocities are visibly affected by the random noise.
Meanwhile, Fig.~\ref{fig:lag-pos-error} and Fig.~\ref{fig:lag-vel-error} show that the position and velocity mean square lag errors converge exponentially to zero, implying the lag consensus enjoys the robustness.
\end{example}

%============ Small World =============
\begin{figure*}[!htbp]
    \centering
    
    \begin{subfigure}[b]{0.93\textwidth}
        \centering
        \includegraphics[width=\linewidth]{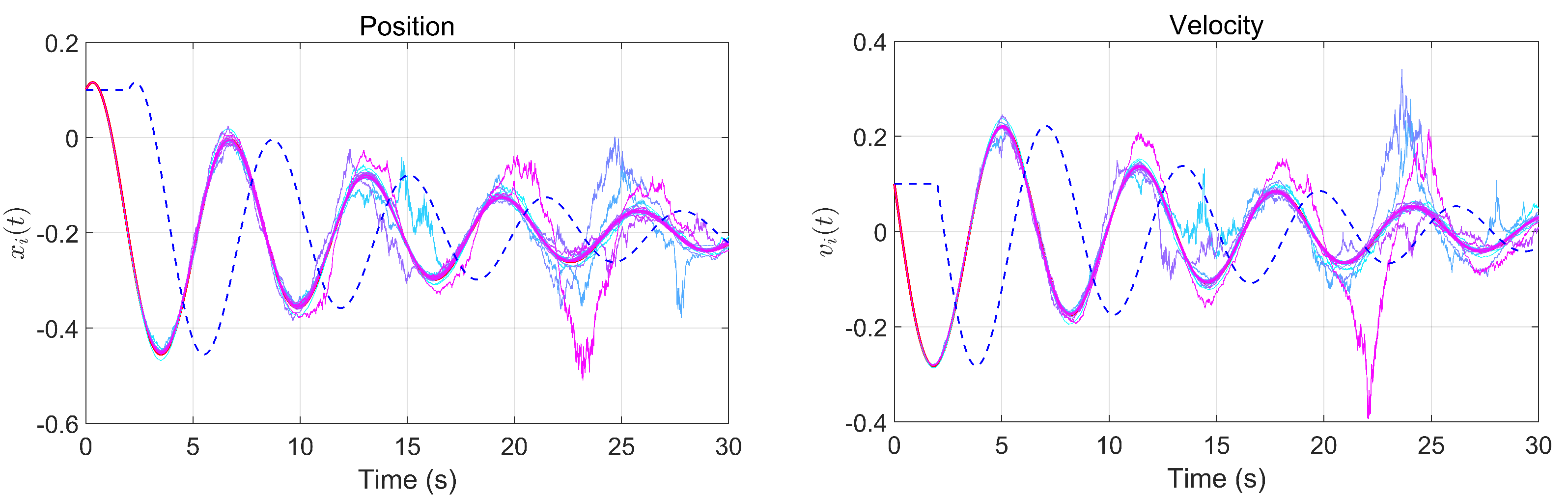}
        \caption{Lag consensus with small feedback gain $\kappa_1=1.7$.}
        \label{fig:lag-con-small-word-xv-3-1}
    \end{subfigure}
    
   % \vspace{8pt}   % adjust vertical space between subfigures
    
    \begin{subfigure}[b]{0.93\textwidth}
        \centering
        \includegraphics[width=\linewidth]{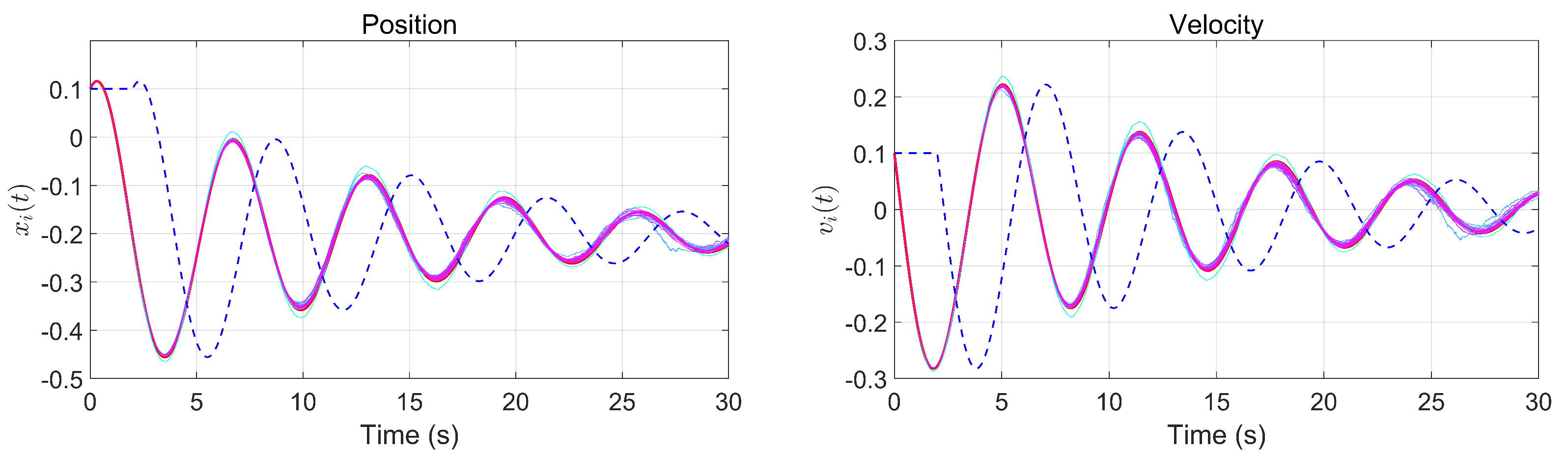}
        \caption{Lag consensus with feedback gain $\kappa_1=5.5$.}
        \label{fig:lag-con-small-word-xv}
    \end{subfigure}
    
    %\vspace{8pt}
    
    \begin{subfigure}[b]{0.93\textwidth}
        \centering
        \includegraphics[width=\linewidth]{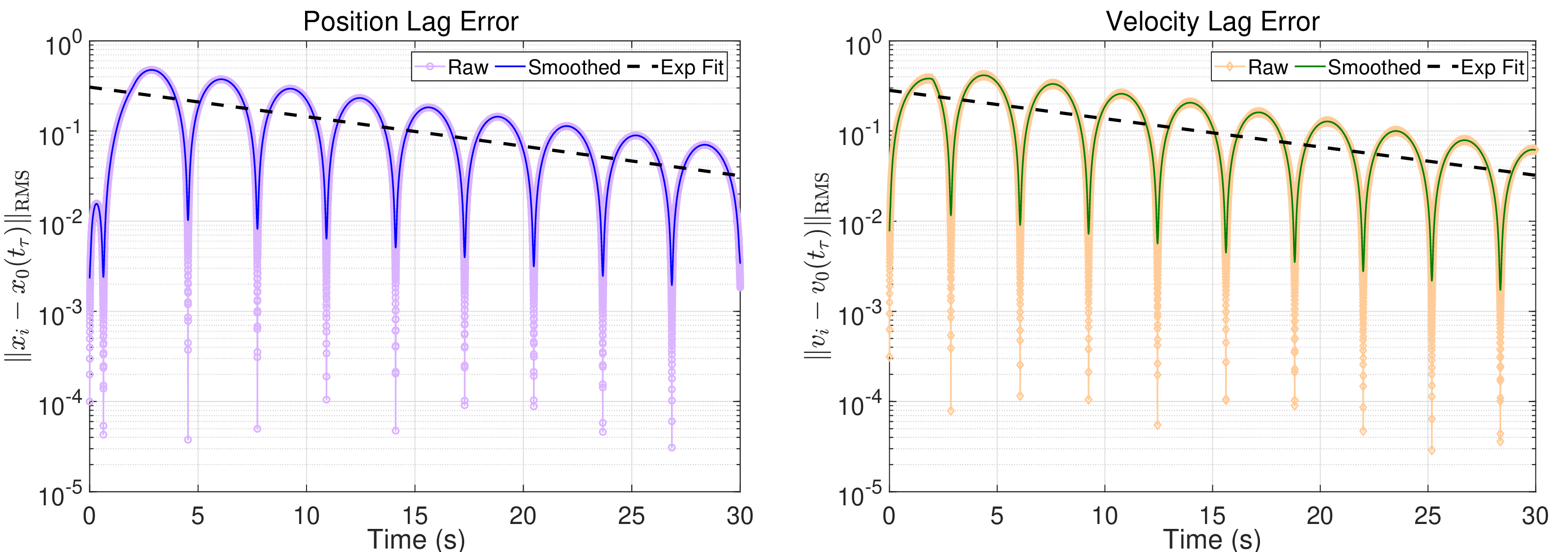}
        \caption{Exponential convergence.}
        \label{fig:lag-con-small-word-error-expon}
    \end{subfigure}
    
    \caption{Mean-square lag consensus in small world based CPS with $50$ followers with lag $\tau=2$ {s}.}
    \label{fig:small-world-50-nodes}
\end{figure*}

\begin{example}[Small world networks]
Without loss of generality, this example utilizes a small-world digraph networks as the agent-based CPS model to explain lag consensus, where the nonlinear dynamics is simulated as 
\begin{align*}
  f(x_i,v_i) = -\sin(x_i) - 0.15v_i - 0.2, \quad i=0,\ldots,50.
\end{align*}
That is, the network consists of agent $ i=0$ serving as the leader and the root node, while the remaining $N=50$ agents are follower agents. We assume that $\kappa_1 >0$, while $\kappa_i = 0$ for $i = 2, \ldots, N$, i.e., only agent $i=1$ receives information from the leader. We generate the intermittent intervals over $t\in[0,30]$ as done in the previous example, and take $\psi=0.7$. We select the parameters $c_1=c_2=1$, $\beta_1=\beta_2=0.9$, $\rho_1= 1$, $\rho_2 = 0.15$ , and evaluate the influence of $\kappa_1$ to fit the suitable conditions in Theorem~\ref{thm:stochastic-lag-consensus-CPS}. The evolutions of second-order lag consensus in the small-world case (using a low feedback gain $\kappa_1=1.7$) is shown in Fig.~\ref{fig:lag-con-small-word-xv-3-1}. It is clear that the followers’ trajectories are perturbed by stochastic noise with large deviations, and they fail to track the leader’s trajectory with a lag $\tau=2$ s, as some followers diverge due to the disturbances. For comparisons, we now increase the feedback gain to $\kappa_1=5.5$. In this scenario, Fig.~\ref{fig:lag-con-small-word-xv} shows that the followers’ trajectories can perfectly track the leader’s trajectory in both position and velocity, while providing robustness against the stochastic perturbations, achieving the mean-square lag consensus. Fig.~\ref{fig:lag-con-small-word-error-expon} reveals the convergence of lag errors in position and velocity; clearly, both decay exponentially in the mean-square sense.
\end{example}

%==========
\section{Conclusion}\label{sec:conclusion}
In this paper, we explored second-order lag consensus in agent-based CPS within the PT-DT framework, where the estimated DT states were stochastically perturbed by the perception layer, resulting in the stochastic lag error dynamics. We thus proposed a resilient lag-consensus protocol that mitigated random noise and intermittent input failures, ensuring mean-square lag consensus under sufficient conditions. Numerical simulations demonstrated the effectiveness of the proposed strategy. 
 Future work will explore the controller’s performance, the effect of delay margins, and its real-world applications, such as connected autonomous vehicles.

%\section*{Acknowledgment}

\appendix
%=================PROOF of THEOREM=========
\subsection{Appendix~A: Proof of Theorem~\ref{thm:stochastic-lag-consensus-CPS}}\label{appendix-proof-thm}
\begin{proof}
\rm Construct a Lyapunov function as
\begin{align}\label{eq:Lyapunov-fun-V}
V(\xi) = \frac{1}{2}\xi(t)^{\top}(P \otimes {I_n})\xi(t),
\end{align}
where %$V(\xi):\mathbb{R}^{2Nn}\rightarrow \mathbb{R}^{+}$, $\xi\in\mathbb{R}^{2n}$, and
\begin{align}
P=\begin{pmatrix}
  2c_{1}c_2\tilde{L}^{s} & c_{1}I_N\\
  c_{1}I_N & c_{2}I_N
\end{pmatrix}.
\label{eq:Lyapunov-P-matrix}
\end{align}
Therefore, we have the following inequalities 
\begin{align}
    \frac{1}{2}\lambda_{\min}(P_1)\|\xi\|^2
    \leq V(\xi)\leq
    \frac{1}{2}\lambda_{\max}(P_2)\|\xi\|^2,
    \label{eq:V_fun_lower-upper-bound}
\end{align}
where the upper and lower bounded matrices are
\begin{align*}
P_1=\begin{pmatrix}
\begin{smallmatrix}
  2c_{1}c_2\lambda_{\min}(\tilde{L}^{s}) & c_{1}I_N\\
  c_{1}I_N & c_{2}I_N
  \end{smallmatrix}
\end{pmatrix},~ 
P_2=\begin{pmatrix}
\begin{smallmatrix}
  2c_{1}c_2\lambda_{\max}(\tilde{L}^{s}) & c_{1}I_N\\
  c_{1}I_N & c_{2}I_N
    \end{smallmatrix}
\end{pmatrix}.
\end{align*}

For the sake of convenience, we define $V_{\xi}(t)=V(\xi(t))$. 
Then, for $t \in [t_{k},s_{k}]$, with the help of
It\^{o} formula based Lyapunov function in \cite{Oksendal2003}, we have
\begin{align}
\der{V}_\xi=\mathcal{L}V\der{t}+\,\xi(t)^{\top}(PU \otimes {I_n})\xi(t)\der{W(t)}.
\end{align}

%------------R_1, R_2========
\begin{figure*}[h!]
    \begin{align}
        R_1 &\!=\! 
        \begin{pmatrix}
     %   \begin{smallmatrix}
{c_1^2{{\tilde L}^s} - {\Delta _1}{I_N}}&{{0_N}}\\
{{0_N}}&{c_2^2{{\tilde L}^s} - ({c_1} +{\Delta _2}){I_N}}
%\end{smallmatrix}
\end{pmatrix},~
R_2\!=\!\begin{pmatrix}
%\begin{smallmatrix}
c_1{\rho_1}{I_N} - c_1^2{L^s} & 2{c_1}{c_2}{K^s} + {\Delta_3}{I_N}\\
\star&({c_1} + {c_2}{\rho _2}){I_N} - c_2^2{L^s}
%\end{smallmatrix}
\end{pmatrix},~
% Q\!=\!%-\frac{1}{2}({H_{1}^T}P+PH_1)\!=\!
% \begin{pmatrix}\begin{smallmatrix}
% {c_1^2{{\tilde L}^s}}&{{0_N}}\\
% {{0_N}}&{c_2^2{{\tilde L}^s} - {c_1}{I_N}}
% \end{smallmatrix}
% \end{pmatrix},~
S\!=\!\begin{pmatrix}
    %\begin{smallmatrix}
        c_1\rho_1{I}_N & \Delta_3{I}_N\\
        \star & c_2\rho_2{I}_N
  %  \end{smallmatrix}
\end{pmatrix}\notag\\
 \Delta_{1}&=c_{1}\rho_{1}+{\Delta_3}, \quad  \Delta_{2}={c_{2} \rho_{2}}+{\Delta_3}, \quad {{\Delta_3}={({{c_1}{\rho _2}}+{{c_2}{\rho _1}})}/{2}}
   \label{eqs:parameters-matrices}
    \end{align}
\end{figure*}

We compute the It\^{o} generator $\mathcal{L}V$ along the trajectories of system \eqref{eq:stochastic-error-dynamics}:
\begin{align}
\mathcal{L}V
&= \xi(t)^{\top} (P \otimes I_n) \big[\tilde{F}_{\tau}(t) + (H_1 \otimes I_n)\xi(t)\big] \notag\\
&\quad + \frac{1}{2} \mathrm{tr} \big[\xi(t)^{\top} (U^{\top} P U \otimes I_n) \xi(t) \big] \notag\\
&\le(c_1\tilde{x}(t)^{\top} +c_2 \tilde{v}(t)^{\top}) 
\big[F(x(t),v(t)) -\bm{f}_N(x_0^{\tau},v_0^{\tau})\big] \notag\\
&\quad - \xi(t)^{\top} (Q \otimes I_n) \xi(t) 
+ \frac{1}{2} \|U^{\top}PU\|\|\xi(t)\|^2.
\label{360}
\end{align}
Here $Q\!=\!-(PH_1)^{s}=\frac{-(H_1^{\top}P+PH_1)}{2}$ or detailed is given by
\begin{align*}
 Q   =\begin{pmatrix}
{c_1^2{{\tilde L}^s}}&{{0_N}}\\
{{0_N}}&{c_2^2{{\tilde L}^s} - {c_1}{I_N}}
%\end{smallmatrix}
\end{pmatrix}.
\end{align*}

% \[Q=-\frac{(H_1^{\top}P+PH_1)}{2},
% \mathrm{tr} \big[\xi(t)^{\top} (U^{\top} P U \otimes I_n) \xi(t)\big] 
% \le \|U^{\top} P U\| \, \|\xi(t)\|^2.
% \]

%%==========================
By Young's inequality, we obtain the following results
\begin{align}
&\quad {c}_1 \, \tilde{x}(t)^\top \Big[F(x(t),v(t)) -\bm{f}_N(x_0^{\tau},v_0^{\tau}))\Big]\notag\\
&\le c_1 \sum_{i=1}^{N} \Big( \big(\rho_1 + \frac{\rho_2}{2}\big) \|\tilde{x}_i(t)\|^2 + \frac{\rho_2}{2} \|\tilde{v}_i(t)\|^2 \Big), \label{ineq:c1_bound} \\[2mm]
&\quad{c}_2 \, \tilde{v}(t)^\top \Big[F(x(t),v(t)) -\bm{f}_N(x_0^{\tau},v_0^{\tau})\Big]\notag\\
&\le c_2 \sum_{i=1}^{N} \Big( \frac{\rho_1}{2} \|\tilde{x}_i(t)\|^2 + \big(\frac{\rho_1}{2} + \rho_2\big) \|\tilde{v}_i(t)\|^2 \Big). \label{ineq:c2_bound}
\end{align}

\par By Assumption~\ref{assum:Lipschitz-cond-nonlinear}, applying inequalities \eqref{ineq:c1_bound} and \eqref{ineq:c2_bound} in \eqref{360}, we obtain
\begin{align}
\mathcal{L}V &\le ({c_1}{\rho_1}+{\Delta_3}) \sum_{i=1}^N \|\tilde{x}_i(t)\|^2
+ ({\Delta_3}+{c_2}{\rho_2}) \sum_{i=1}^N \|\tilde{v}_i(t)\|^2 \notag\\
&\quad - \xi(t)^\top (Q \otimes I_n) \xi(t) + \frac{1}{2} \xi(t)^\top (U^\top P U \otimes I_n) \xi(t) \notag\\
&\le -\xi(t)^\top \Big( R_1 \otimes I_n - \frac{1}{2} U^\top P U \otimes I_n \Big) \xi(t) \notag\\
&\le -\Big(\lambda_{\min}(R_1) - \frac{1}{2} \lambda_{\max}(U^\top P U)\Big) \|\xi(t)\|^2 \notag\\
&= -\frac{\gamma}{2} \|\xi(t)\|^2 \overset{\eqref{eq:V_fun_lower-upper-bound}}{\le} - \frac{\gamma}{\lambda_{\max}(P_2)} V_\xi(t),
\label{ineq:ELV_R_1}
\end{align}
where 
$\gamma := 2 \lambda_{\min}(R_1) - \lambda_{\max}(U^\top P U)$
and the last inequality follows from the Rayleigh quotient bounds 
$\xi^\top R_1 \xi \ge \lambda_{\min}(R_1) \|\xi\|^2$ and $\xi^\top (U^\top P U) \xi \le \lambda_{\max}(U^\top P U) \|\xi\|^2$.  
The matrix $R_1$ is defined in \eqref{eqs:parameters-matrices}.

By inequality \eqref{eq:V_fun_lower-upper-bound} and the monotonicity of expectation, expecting both sides of inequality \eqref{ineq:ELV_R_1}, we have
\begin{align*}
\mathbb{E}\mathcal{L}V
\leq- \frac{\gamma}{2}\mathbb{E}\|\xi(t)\|^2
\le  - \frac{\gamma}{\lambda_{\max}(P_2)}\mathbb{E}V_{\xi}(t),
\end{align*}
and owing to $\mathbb{E}[\xi(t)^{\top}(PU \otimes {I_n})\xi(t)]=0$,
therefore, we have 
\begin{align}
\frac{\der\mathbb{E}V_{\xi}(t)}{\der{t}} &\le -\frac{\gamma}{\lambda_{\max}(P_2)}\mathbb{E}V_{\xi}(t)\triangleq - \bar{\mu}_{1}\mathbb{E}V_{\xi}(t).\label{362}
\end{align}
%Therefore, owing to the inequality $\frac{{d\mathbb{E}V(t)}}{{dt}} \le - {{\bar \mu }_1}\mathbb{E}V(t)$.
\par We now further emphasize on time width with control signals $[t_{k},t_{k+1}]:=I_{1}\cup I_{2}$, i.e.,
% %
% $I_1=\{t \in [t_{k},s_{k}]:\mathbb{E}V_{\xi}(t)=0\}$ and $I_2=\{t \in (s_{k},t_{k+1}):\mathbb{E}V_{\xi}(t)>0\}$. 
\begin{align*}
    I_1&=\{t \in [t_{k},s_{k}]:\mathbb{E}V_{\xi}(t)=0\}, \\
    I_2&=\{t \in (s_{k},t_{k+1}):\mathbb{E}V_{\xi}(t)>0\}.
\end{align*}
If $t\in I_1$, then $\mathbb{E}V_{\xi}(t)=0$.
 Therefore,
\begin{align}
0=\mathbb{E}V_{\xi}(t) \leq \mathbb{E}V_{\xi}(t_k)e^{- \bar{\mu}_1(t-t_{k})}.
\end{align}
\par Otherwise, if $t\in I_2$, by integrating the both sides of the inequality \eqref{362}, we obtain
\begin{align}
\int_{t_k}^{s_k} \frac{\der\mathbb{E}V_{\xi}(t)}{\mathbb{E}V_{\xi}(t)} \leq\int_{t_k}^{s_k} -\bar{\mu}_{1}dt,\notag
\end{align}
and hence
\begin{align}
\mathbb{E}V_{\xi}(t) \le \mathbb{E}V_{\xi}(t_k)e^{- \bar{\mu}_1(t-t_{k})}.\label{368}
\end{align}

\par Similarly, when $t\in(s_k,t_{k+1}), k\in \mathbb{N}$, with the help of the inequalities \eqref{ineq:c1_bound}, \eqref{ineq:c2_bound} and \eqref{eq:V_fun_lower-upper-bound}, the same process gives
\begin{align}
\mathcal{L}V &= \xi(t)^{\top} (P \otimes I_n) \big[\tilde{F}_{\tau}(t) + (H_2 \otimes I_n)\xi(t)\big] \notag \\
&= (c_1 \tilde{x}(t)^{\top} + c_2 \tilde{v}(t)^{\top}) \big[F(x(t),v(t)) -\bm{f}_N(x_0^{\tau},v_0^{\tau})\big]  \notag \\
&\quad + \xi(t)^{\top} \big((P H_2)^s \otimes I_n\big) \xi(t) \notag \\
&\le \xi(t)^{\top} \Big(\big(S +(PH_2)^s\big)\otimes I_n\Big) \xi(t) \notag \\
%&\leq \xi(t)^{\top} (R_2 \otimes I_n) \xi(t) \notag \\
&\leq \lambda_{\max}(R_2) \|\xi(t)\|^2\overset{\eqref{eq:V_fun_lower-upper-bound}}\le \frac{2\lambda_{\max}(R_2)}{\lambda_{\min}(P_1)} V_{\xi}(t), \label{ineq:ELV-R_2}
\end{align}
where the details of $R_2=S+(PH_2)^{s}$ and $S$  are shown in \eqref{eqs:parameters-matrices}.
Taking expectation value for both sides of \eqref{ineq:ELV-R_2} yields
\begin{align}
\frac{\der\mathbb{E}V_{\xi}}{\der{t}} \leq \frac{2\lambda_{\max}(R_2)}{\lambda_{\min}(P_1)}\mathbb{E}V_{\xi}(t) \triangleq\bar{\mu}_2\mathbb{E}V_{\xi}(t),\label{364}
\end{align}
which implies that
\begin{align}
\mathbb{E}V_{\xi}(t) \le \mathbb{E}V_{\xi}(s_{k})e^{- \bar{\mu}_2(t - s_k)}.\label{369}
\end{align}
\par From \eqref{368} and \eqref{369}, for $t \in [t_k,t_{k+1}), k\in \mathbb{N}$, we have
\begin{align*}
\mathbb{E}V_{\xi}(t)&\le{\max}\left\{\mathbb{E}V_{\xi}(t_{k})e^{- \bar{\mu}_1(t - t_{k})},\mathbb{E}V_{\xi}(s_{k})e^{- \bar{\mu}_2(t - s_{k})}\right\}\\
&\le {M_k}e^{-\breve{\mu}t},
\end{align*}
where $M_{k}=\max\big\{\mathbb{E}V_{\xi}(t_{k})e^{\bar{\mu}_{1}t_{k}},\mathbb{E}V_{\xi}(s_k)e^{\bar{\mu}_{2}s_{k}}\big\}$, $k\in \mathbb{N}$ and $\breve{\mu}\triangleq{\min}\left\{\bar{\mu}_1,\bar{\mu}_2\right\}$.
Also, we define ${\bar M}={\sup _k}{M_k}$, and show ${\bar M}<\infty$. In fact,
combing \eqref{369} with \eqref{368}, we derive that
\begin{align}\label{390}
&\quad~\mathbb{E}V_{\xi}(t_{l+1})\notag\\
&\le \mathbb{E}V_{\xi}(s_l) 
   \exp\big[\bar\mu_2 (t_{l+1} - s_l)\big] \notag\\
&\le \mathbb{E}V_{\xi}(t_l) 
   \exp\big[-\bar\mu_1 (s_l - t_l) + \bar\mu_2 (t_{l+1} - s_l)\big] \notag\\
&\le \mathbb{E}V_{\xi}(t_{l-1})
   \exp\Big\{
      \big[-\bar{\mu}_1 (s_{l-1} - t_{l-1}) + \bar{\mu}_2 (t_l - s_{l-1})\big] \notag\\
&\qquad\qquad \qquad+ \big[-\bar{\mu}_1 (s_l - t_l) + \bar{\mu}_2 (t_{l+1} - s_l)\big] \Big\} \leq \ldots \notag\\
&\le \mathbb{E}V_{\xi}(0)
   \exp\Big\{
      -\bar{\mu}_1 \sum_{k=0}^l (s_k - t_k)
      + \bar{\mu}_2 \sum_{k=0}^l (t_{k+1} - s_k)
   \Big\} \notag\\
&\le \mathbb{E}V_{\xi}(0)
   \exp\big[-(l+1)(\bar{\mu}_1 \theta - \bar{\mu}_2 (\delta - \theta))\big].
\end{align}

\par As a result, we obtain
\begin{align}
\mathbb{E}V_{\xi}(t)\le \bar{M}e^{-\breve{\mu}t},\quad \forall t \ge 0.
\end{align}
\par On the other hand, from \eqref{eq:V_fun_lower-upper-bound} and \eqref{eq:Lyapunov-fun-V}, we have
\begin{align*}
\mathbb{E}V_{\xi}(t)\ge \frac{1}{2}\lambda _{\min}(P_1)\mathbb{E}\|\xi(t)\|{^2},
\end{align*}
and hence
\begin{align}
\mathbb{E}\|{\xi}(t)\|{^2}\leq\frac{2\bar M}{\lambda _{\min}(P_1)}e^{-\breve{\mu}t}, \quad \forall t\ge0.
\end{align}
\par It also leads to exponential stability of \eqref{eq:stochastic-error-dynamics} since
\begin{align*}
 \frac{1}{t}\log (\mathbb{E}\|\xi(t)\|{^2})\leq\frac{1}{t}\bigg(\log \frac{2 \bar{M}}{\lambda_{\min}(P_1)}-{\breve{\mu}}t\bigg)\to{-\breve{\mu}}<0,
\end{align*}
as $t\to \infty$. It further implies that the followers' position $x(t)$ and velocity $v(t)$ globally and exponentially converge in mean square to the leader one $x_{0}(t_\tau)$ and $v_0(t_\tau)$ respectively, as $t\to {+}\infty $. Hence, second-order agent networks \eqref{eq:follower-dynamics} and \eqref{eq:leader-reference} can reach mean square lag consensus. %The proof is completed.
%\hfill$\blacksquare$
\end{proof}

 % \section*{Acknowledgments}
 % This work was initiated while Z. Zhang was a visiting researcher
 % in M. Nagahara's group, supported by a grant from The Study Abroad Program for Graduate Student of GUET

\bibliographystyle{ieeetr}
\bibliography{reference}

\end{document}